\providecommand{\tabularnewline}{\\}
\theoremstyle{definition}
\newtheorem{defn}{\protect\definitionname}
\theoremstyle{plain}
\newtheorem{assumption}{\protect\assumptionname}
\theoremstyle{plain}
\newtheorem{lem}{\protect\lemmaname}
\theoremstyle{plain}
\newtheorem{prop}{\protect\propositionname}
\theoremstyle{plain}
\newtheorem{thm}{\protect\theoremname}
\theoremstyle{plain}
\providecommand{\assumptionname}{Assumption}
\providecommand{\definitionname}{Definition}
\providecommand{\lemmaname}{Lemma}
\providecommand{\propositionname}{Proposition}
\providecommand{\theoremname}{Theorem}
\providecommand{\corollaryname}{Corollary}
\providecommand{\assumptionname}{Assumption}
\providecommand{\definitionname}{Definition}
\providecommand{\lemmaname}{Lemma}
\providecommand{\propositionname}{Proposition}
\providecommand{\theoremname}{Theorem}
\titlespacing\section{0pt}{6pt}{4pt}
\titlespacing\subsection{0pt}{4pt}{2pt}
\titlespacing\subsubsection{0pt}{2pt}{2pt}
 \titlespacing*{\paragraph}{0pt}{1.25ex plus 1ex minus .2ex}{0.5em}
\begin{document}

\title{Engagement Maximization}
\author{Benjamin H\'{e}bert (Stanford and NBER) and Weijie Zhong (Stanford)\thanks{The authors would like to thank Peter DeMarzo, Sebastian Di Tella,
Darrell Duffie, Emir Kamenica, and Michael Woodford for helpful comments,
and Ian Ball and Arjada Bardhi for excellent discussions. All remaining errors are
our own. bhebert@stanford.edu and weijie.zhong@stanford.edu}}
\maketitle
\begin{abstract}

We investigate the management of information provision to maximize user engagement. A principal sequentially reveals signals to an agent who has a limited amount of information processing capacity and can choose to exit at any time. We identify a ``dilution'' strategy---sending rare but highly informative signals---that maximizes user engagement. The platform's engagement metric shapes the direction and magnitude of biases in provided information relative to a user-optimal benchmark. Even without intertemporal commitment, the platform replicates full-commitment revenue by inducing the user's belief to remain ``as uncertain as'' the prior until the rare, decisive signal arrives and induces stopping. We apply our results to two contexts: an ad-supported internet media platform and a teacher attempting to engage test-motivated students.
\end{abstract}
Key Words: Information Acquisition, Recommendation Algorithms, Polarization,
Rational Inattention 
\noindent \begin{flushleft}
JEL Codes: D83, D86 
\par\end{flushleft}

\thispagestyle{empty}

\pagebreak{}

\setcounter{page}{1}

\section{Introduction\label{sec:Introduction}}

Free-to-use online platforms such as Facebook, Instagram, Youtube,
and Pinterest are used by billions of people worldwide and earn substantial
profits by displaying advertisements to their users. Their business
models are powered by personalized recommendation algorithms that
seek to maximize the ``engagement'' of each user by selectively
displaying content (\citet{lada_wang_yan_2021}, \citet{news_feed_engagement}).
Large quantities of computing and other resources have been invested
by these firms to develop algorithms that can predict billions of
users' preferences in real time.

A key challenge for these algorithms is to manage users' incentives.
Content is presented sequentially in ``news feeds,'' ``recommendations,''
or ``timelines,'' and users can choose freely \emph{what} to pay
attention to and \emph{when} to stop using the platform based on the
entire history of content presented thus far. From the platform's
perspective, providing the most useful content immediately is suboptimal,
as a user might become satisfied and choose to stop using the platform,
limiting the quantity of advertisements the platform can display to
the user. On the other hand, if the user anticipates that the platform
will never provide useful content, they will never begin to use the
platform in the first place.

A second, subtler challenge is to manage the algorithm's own ``incentives''. As is suggested by the previous paragraph, to maximize user's \emph{ex ante} engagement, platforms often design these algorithms with an initial commitment of providing reliable and wide-ranging content. Subsequent training of the algorithms based on user interactions, however, can induce algorithms to behave in a dynamically inconsistent manner. For instance, at a later time, the algorithm may ``learn'' to steer users into niche ``rabbit holes '' (\cite{news_tiktok}) because it delivers short-term engagement gains. When the platform lacks long term commitment power, this cause users to question the credibility of the platform's stated design goals. It is therefore unclear, ex-ante, whether concerns about dynamic inconsistency meaningfully constrain algorithm design.

Engagement maximization is relevant in many other contexts. We will pair the example of an internet platform engaging a user with a second example, that of a teacher who seeks to maximize
the engagement of a student who cares only about passing a test.\footnote{We are grateful to Emir Kamenica for suggesting this alternative setting
as an application of our model.} Our theoretical framework applies to this setting without modification. 

We study the optimal design of sequential information
presentation from a principal-agent perspective. The principal (the
platform/teacher) provides the agent (the user/student) with information. The agent
derives value from the information, and chooses when to stop engaging
with the provided information.\footnote{We show that it is without loss of generality to assume the agent
does not process the information selectively; that is, the agents
attends to the information the principal provides in equilibrium.} The rate at which the agent can process information is limited, and
the agent experiences an opportunity cost of time spent on the platform
net of any utility from using the platform. The principal's goal is
to maximize the attention the agent allocates to the platform (which
under some conditions will be equivalent in equilibrium to maximizing
the stopping time), and the principal accomplishes this by choosing
the nature of the information provided to the agent. The key modeling
assumptions we impose are that the principal knows perfectly the agent's
preferences and that the principal can flexibly manipulate the entire
information flow.

Two modeling features allow us to derive the rich implications in this environment. First, our model allows the measures of (i) engagement, (ii) information processing capacity, and (iii) the value of information to differ. This allows our model to capture a rich set of possible incentive misalignments between the principle and the agent. Second, we analyze our model both with and without intertemporal commitment, which allows us to characterize the implications of a requirement for dynamic consistency on the dynamics of information provision. 

Our main result is the characterization of an optimal strategy for
the principal. First, we completely characterize the optimal ``overall''
information structure (i.e. the signal-state joint distribution that
describes the beliefs the agent will hold when choosing to stop). It is characterized by the solution to
an augmented static rational inattention (RI) problem: the information
structure maximizes a linear combination of the instrumental value, the engagement measure, and the informativeness measure of information with endogenous weights.
Second, we identify one optimal sequential information structure for
the principal: the principal sends a ``dilution'' of the overall
information structure (a compound Poisson process such that a signal
arrives at a Poisson rate and upon arrival, the signal is distributed
according to the overall information structure). 

Having described a single optimal strategy, we then show that the following features must hold for \emph{any} optimal strategy:

\begin{itemize}
    \item \textbf{Engagement-driven biases}: We compare the optimal information structure with the benchmark solution when the agent chooses both the information and the stopping time. We show that engaging with the platform leads to the agent being biased towards ``more extreme'' beliefs relative to the agent-optimal benchmark. More specifically, we show that the direction and magnitude of such biases are determined by the principal's engagement measure.

    \emph{Leading Case: All Engagement is Profitable}. Suppose that the principal's engagement measure is identical to the informativeness measure. In the internet platform context, we interpret this case as one in which the agent is exposed to ads in proportion to the context she receives, with no measurement of whether she attends to the ads. In this case, the optimal information structure is narrow but deep: the principal only provides information about the agent's interested dimensions. This information is provided in a rare but chunky way, leading the agent to spend additional time and acquire additional information relative to the agent-optimal benchmark. This result echoes the observation that engagement maximizing recommendation algorithms often lead users down ``rabbit-holes.'' 

    \emph{Leading Case: Decision-Irrelevant Engagement is Profitable}. Now suppose that the principal's revenue comes exclusively from engagement with information that is irrelevant for decision. In the internet platform context, we interpret this case as one in which the agent is exposed to decision-irrelevant ads, and the principal benefits only when the agent engages with these ads (e.g. if ad revenue comes from the click of an ad). In this case, we show that the optimal information structure is broad but noisy: the principal provides excess information (e.g. sponsored content) that the agent views as pure noise, intermixed with just enough decision-relevant information to keep the agent engaged. This result echoes the observation that engagement-maximizing recommendation algorithms often intersperse useful content with ``click-bait'' and ``stealth marketing'' material. 
    
    \item \textbf{Credibility-driven dynamics}: To study policies that are credible under limited intertemporal commitment, we introduce a notion of subgame perfection to our model. Firstly, we show that commitment has no value at all: the principal can achieve the same payoff from the optimal commitment policy even under limited commitment. However, limited commitment has strong implication on the dynamics of the policy: an optimal policy corresponds to a subgame perfect equilibrium if (and nearly only if) the induced belief process of the agent jumps between a small set of very special beliefs --- beliefs at which the agent is ``as uncertain as'' the prior belief. As an implication, the optimal and credible signal process must be a generalized ``dilution'': belief stays constant until a signal arrives at a Poisson rate. The signal brings the posterior belief either to the stopping region or to another such special interim belief. Consequently, the ``dilution'' policy we constructed earlier is indeed credible, and is the uniquely credible policy in certain symmetric settings.

    The intuition for the result is simple: We show that the principal-optimal policy leaves the agent with no surplus at the prior. If the agent were to become ``more certain than'' the prior at an interim stage, the surplus from future information would be negative, and the agent would stop engaging with the principal. On the other hand, if there agent were to become ``less certain than'' the prior at an interim stage, her surplus from future information would be positive. At this point, the principal would be tempted to adjust the information process and extract more surplus, at the agents' expense. Thus, the only possible interim beliefs are those that are exactly ``as certain as'' the prior belief. Our main result in Section \ref{sec:Dynamics} is a formalization of this idea, along with a precise definition of what it means to be more certain, less certain, or as certain as the prior.
\end{itemize}

Our two leading cases are relevant to both the internet platform and teacher-student contexts. In the latter, they map to the degree to which the teacher values the student learning about test-relevant vs. test-irrelevant information. A teacher who values student learning but not test performance will provide just enough test-specific information to engage a student who cares only about the test, while providing additional content that is not part of the test. In contrast, a teacher who values all learning, regardless of whether it is covered on a test, will optimally provide only test-relevant information while inducing students to learn more test-relevant information than they would choose to learn on their own.

\subsection{Related Literature}

Our paper contributes to several strands of literature on the dynamic
provision of information. Viewing our principal as a media company,
our model is related to work on models of media bias (see \citet{gentzkow2015media}
for a survey). We share with \citet{kleinberg2022challenge} an interest
in explaining why the users of internet platforms would engage heavily
with those platforms while perceiving themselves as gaining little
from doing so. We derive this outcome as a result of strategic behavior
by rational agents with conflicting incentives; those authors emphasize
the time-inconsistency of user preferences. We share with \citet{acemoglu2021misinformation}
an emphasis on explaining what kind of information is available on
internet platforms; our analysis focuses on content selection algorithms,
whereas their analysis focuses on information sharing between users.

Closely related to our work is the literature on dynamic Bayesian
persuasion (e.g. \citet{ely2017beeps,renault2017optimal,ely2020moving,orlov2020persuading,che2020keeping})
that build on the static model of \citet{kamenica2011bayesian}. Our
setting differs from most of these papers in that our principal's
payoff solely depends on the engagement of the agent and not on her ultimate choice of action.\footnote{The leading case of our model in which all engagement is profitable is equivalent to one where the principal's only goal is
to ensure the agent continues to pay attention, as in \citet{kawamura2019news}.
However, our principal presents unbiased information, and hence is
not engaged in cheap talk (\citet{crawford1982strategic,cheng2022bayesian}).} Most of these papers focus on the ``Bayesian persuasion'' settings
where the principal's payoffs directly depend on the agent's action
and the maximization of ``engagement'' is a side effect. A more recent strand of literature focuses on the direct maximization of ``attention'' or ``engagement'' without the capacity constraint, including two concurrent papers (\citet{jan2020dynamic} and \citet{koh2022attention}) and two subsequent papers (\citet{koh2024persuasion} and \citet{saeedi2024getting}).\footnote{\citet{jan2020dynamic}'s main focus is on the competition between multiple senders. The single sender case of \citet{jan2020dynamic}
is the closest to ours. \citet{koh2022attention} and \citet{koh2024persuasion} focuses on more general implementability characterization. \citet{saeedi2024getting} focuses on the case with asymmetric prior.} These papers share several common predictions with our model, including Poisson dilution signals and the minimization of ex ante user welfare; indicating the robustness of these features across models. However, the absence of an informational constraint in these papers also drives stark differences---these papers predict full revelation of the state upon signal arrival, indicating no bias from the agent's preferred posterior beliefs.  Our analysis emphasizes the beliefs the agent will arrive at, and how these differ
between the principal's optimal strategy and an agent-optimal benchmark,
a comparison that is not possible absent information processing constraints.
We elaborate further on the connection between our model and these
models in Section \ref{subsec:Optimal-Policy-without}. We assume
both players are long-lived and have commitment power, in contrast
to the limited commitment settings in \citet{orlov2020persuading,che2020keeping}
or the myopic agent settings in \citet{ely2017beeps,renault2017optimal}.
Interestingly, unlike the findings in these papers, neither commitment
power nor forward-lookingness of the agent is necessary for sustaining
our equilibrium strategy (see Section \ref{ssec:credibility}).

Our model predicts gradual information revelation over time, which
is a feature shared by many of the dynamic information design models
(e.g. \citet{ely2015suspense,horner2016selling,che2020keeping,orlov2020persuading}).
However, unlike these papers, the gradual nature of belief evolution
in our model arises from the agent's information processing constraint,
as opposed to a desire to maximize suspense or address problems of
limited commitment. In particular, \citet{che2020keeping} predicts
that the agent's belief involves Poisson jumps and a drift, while
our optimal strategy admits Poisson jumps without a drift.

Formally, our approach is a principal-agent version of \citet{hebert2019rational}.
Those authors consider a model in which a single decision maker chooses
both what information to acquire and when to stop and act, whereas
in our model the principal chooses the information and the agent chooses
when to stop and act. We compare our model to a benchmark in which
the agent chooses both the information and when to stop and act; this
benchmark is characterized by results found in \citet{hebert2019rational}.
We follow \citet{hebert2019rational} in assuming that the principal
can choose any stochastic process for the agent's beliefs, subject
only to the martingale requirement (which is imposed by Bayesian updating)
and the upper bound on the agent's attention. We model this upper
bound using a ``uniformly posterior-separable'' information cost,
in the terminology of the rational inattention literature (\citet{caplin2017shannon}).\footnote{Examples of such information costs include mutual information, as
applied in \citet{sims2010rational}, as well as other proposed alternatives
(\citet{hebert2018informationcosts}, \citet{bloedel2020cost}).}

The rest of the paper is organized as follows. We begin in section
\ref{sec:The-Environment} by describing the basic environment of
our model. Section \ref{sec:Optimal-Policy} characterizes optimal
policy in our baseline model, with an emphasis on the beliefs the agent will hold when stopping. Section \ref{sec:Dynamics} discusses the dynamics of beliefs and the role of commitment
Section \ref{subsec:Optimal-Policy-without} discusses an extension of our baseline model
, and in Section \ref{sec:Conclusion} we conclude.

\section{\label{sec:The-Environment}The Environment}

\subsection{The Agent's Problem}

We study the problem of a rational, Bayesian agent receiving signals
about an underlying state for the purpose of taking an action. We
model information acquisition as a continuous time process, building
on results in \citet{hebert2019rational} and \citet{zhong2017jumps}.

Let $X$ be a finite set of possible states of nature. The state of
nature is determined ex-ante, does not change over time, and is not
known to the agent. Let $q_{t}\in\mathcal{P}(X)$ denote the agent's
beliefs at time $t\in[0,\infty)$, where $\mathcal{P}(X)$ is the
probability simplex defined on $X$. We will represent $q_{t}$ as
vector in $\mathbb{R}_{+}^{|X|}$ whose elements sum to one, each
of which corresponds to the likelihood of a particular element of
$X$, and use the notation $q_{t,x}$ to denote the likelihood under
the agent's beliefs at time $t$ over the true state being $x\in X$. For each state $x$, we use $e_x\in\mathcal{P}(X)$ to denote the degenerate belief that state $x$ occurs with probability $1$.

At each time $t$, the agent can either stop and choose an action
from a finite set $A$, or continue to acquire information. Let $\tau$
denote the time at which the agent stops and makes a decision, with
$\tau=0$ corresponding to making a decision without acquiring any
information. The agent receives utility $u_{a,x}$ if she takes action
$a$ and the true state of the world is $x$, and pays a flow cost
of delay per unit time, $ \kappa >0$, until an action is taken.\footnote{Note that we assume that the delay cost enters the agent's utility additively and the agent does not discount the utility from
acting. Discounting, in our context, has the effect of changing the
relative values of acting and never acting, and for this reason would
complicate our analysis; see the more extensive discussions in \citet{hebert2019rational}
and \citet{zhong2017jumps} on this point.}

Let $\widehat{u}(q')$ be the payoff (not including the cost of delay)
of taking an optimal action under beliefs $q'\in\mathcal{P}(X)$:
\[
\widehat{u}(q')=\max_{a\in A}\sum_{x\in X}q'_{x}u_{a,x}.
\]
In what follows, the convex function $\widehat{u}:\mathcal{P}(X)\rightarrow\mathbb{R}$
will summarize the value the agent places on information. This function,
as opposed to the utility function $u_{a,x}$, can be viewed as the
primitive ``value of information when acting'' in our model; nothing
in our analysis will depend on the nature of the action space $A$
or on the predicted joint distribution of states and actions. As a
result, our analysis extends without modification to the case in which
the agent values information for its own sake (i.e. for non-instrumental
reasons).

The agent's beliefs, $q_{t}$, will evolve as a martingale. This property
follows from ``Bayes-consistency.'' In a single-period model, Bayes-consistency
requires that the expectation of the posterior beliefs be equal to
the prior beliefs. The continuous-time analog of this requirement
is that the belief process is a martingale.

Formally, let $\Omega$ be the sample space. Let $q:\Omega\times\mathbb{R}_{+}\rightarrow\mathcal{P}(X)$
be a canonical c\`{a}dl\`{a}g stochastic process on $\mathcal{P}(X)$, let $\{\mathcal{F}_{t}\}$
be the natural filtration associated with this canonical process,
and let $\mathcal{F}=\lim_{t\rightarrow\infty}\mathcal{F}_{t}$. Let
$\mathcal{T}$ be the set of
non-negative stopping times with respect to $\{\mathcal{F}_{t}\}$.

Given a probability measure $P$ defined on $(\Omega,\mathcal{F})$,
$(\Omega,\mathcal{F},\{\mathcal{F}_{t}\},P)$ defines a probability
space. The agent's problem, given this probability space, is to choose
a stopping time to solve 
\[
V(P)=\sup_{\tau\in\mathcal{T}}\mathbb{E}^{P}[\widehat{u}(q_{\tau})- \kappa \tau|\mathcal{F}_{0}].
\]

\subsection{The Principal's Problem}

The principal chooses the information the agent receives so as to
maximize engagement. We begin with defining a measure for the agent's information processing
using a continuous time version of what \citet{caplin2017shannon}
call a ``uniformly posterior-separable'' cost function, as described
in \citet{hebert2019rational}. Uniformly posterior-separable cost
functions are defined in terms of a ``generalized entropy function,''
$H:\mathcal{P}(X)\rightarrow\mathbb{R}_{+}$. We assume that $H$
is twice continuously-differentiable and strongly convex.\footnote{I.e. the Hessian matrix of $H$ is strictly positive definite.} 
We assume the agent can process information
at a rate no greater than $\chi > 0 $. That is, the process $H(q_t)-\chi t$ must be an $\mathcal{F}_t$-supermartingale, requiring that for all $t$ and $\delta>0$, 
\begin{equation}
\mathbb{E}^P[H(q_{t+\delta})-\chi(t+\delta)|\mathcal{F}_t] \leq H(q_t) - \chi t ,\label{eq:general-constraint}
\end{equation}
where we adopt the convention that $q_t = q_\tau$ for $t>\tau$.\footnote{This constraint is equivalent to the one studied in \citet{hebert2019rational} within the class of uniformly posterior-separable costs, allowing us to invoke their results.}

The principal's goal is to maximize ``engagement,'' which is not necessarily the same as maximizing the information acquired.
Specifically, we assume that the principal earns profits in proportion
to cumulative information acquisition as measured by the convex function $G:\mathcal{P}(X)\rightarrow\mathbb{R}_{+}$. 
The function
$G$ is not necessarily strictly or strongly convex, although $G=H$ will be one of our leading cases.


The principal's goal is to design the agent's belief process so as
to maximize profits, taking into account the fact the agent will know
the nature of the beliefs process and optimally choose when to stop
paying attention. Let $\bar{q}_{0}\in\mathcal{P}(X)$ be the agent's
prior. The principal chooses his policies from the set $\mathcal{A}(\bar{q}_{0})$,
which is the set of probability measures on $(\Omega,\mathcal{F})$
such that $q$ is martingale belief processes with $q_{0}=\bar{q}_{0}$ and
non-negative stopping times $\tau$ such that \eqref{eq:general-constraint} is satisfied.
Formally, the principal chooses the probability measure $P$, which
is equivalent to choosing the law of the belief process $q$. In
this sense, the principal can choose any c\`{a}dl\`{a}g martingale
belief process with $q_{0}=\bar{q}_{0}$, subject to the constraint
imposed by the agent's information processing capacity.\footnote{An alternative interpretation is that the principal can send more
information than the agent can process, in which case the agent chooses
which information to attend to. Allowing the agent this choice cannot
benefit the principal, and it is therefore without loss of generality
to suppose the principal chooses a process that satisfies the agent's
information processing constraint.}
\begin{defn}
\label{def:The-principal's-problem}The principal's problem given
initial belief $\bar{q}_{0}\in\mathcal{P}(X)$ is to maximize engagement,
\begin{align}
J(\bar{q}_{0}) & =\sup_{(P,\tau)\in\mathcal{A}(\bar{q}_{0})}\mathbb{E}^{P}[G(q_{\tau})-G(\bar{q}_{0})|\mathcal{F}_{0}]\tag{P}\label{eq:P}
\end{align}
subject to the agent's stopping decision, 
\[
\mathbb{E}^{P}[\widehat{u}(q_{\tau})- \kappa \tau|\mathcal{F}_{0}] \geq \sup_{\tau'\in\mathcal{T}}\mathbb{E}^{P}[\widehat{u}(q_{\tau'})- \kappa \tau'|\mathcal{F}_{0}].
\]
\end{defn}

\subsection{Discussion}

The generalized entropy functions $G$ and $H$ and value of information $\hat{u}$ play conceptually
distinct roles in our model. The function $H$ governs the agent's
information processing constraint. It determines which kinds of information
are relatively easy or difficult for the agent to process. The function $\hat{u}$ describes the information the agent would like to receive, and the function $G$ describes the information the principal would like the agent to receive.

To simplify the discussion that follows, we will assume that the state space has a product structure, $X = X_1 \times X_2$, and that the agent values only information about $x_1 \in X_1$ (i..e $\hat{u}$ depends only on the marginal posterior over $X_1$). The agent benefits from learning about $x_2\in X_2$ only if doing so makes it easier to learn about $x_1 \in X_1$. Whether or not this is the case is governed by the $H$ function, and our general framework admits both possibilities.\footnote{The specific property of the $H$ function that governs whether or not learning about $x_2$ makes it cheaper to learn about $x_1$ is what \cite{hebert2023information} call ``R-monotonicity.''} We interpret $x_1$ as the useful content and $x_2$ as content that has no intrinsic value to the user. Throughout the paper, we assume that the user ``passively'' learns about both $x_1$ and $x_2$, as long as the principal supplies both types of content. In the internet platform context, this can be interpreted as reflecting ``stealth marketing'' or ``click-baiting'' behavior that makes ads indistinguishable from relevant content. However, even if the ads are obvious and the user can effortlessly avoid them, she still finds it optimal to fully process information about $x_2$ in equilibrium.\footnote{This statement has been formally proved in the working paper version of the current paper (see Proposition 8 of \cite{hébert2025engagementmaximization}). }

We focus our discussion on the extent to which $G$ differs from $H$ and $\hat{u}$. One leading case of interest is when $G=H$.\footnote{The case in which $G$ is proportional to $\hat{u}$ is essentially identical to the $G=H$ case, and for this reason we will discuss only the latter.} We interpret this case as reflecting a kind of indifference on the part of the principal: he does not care what the agent learns about, only that she learn as much as possible. In equilibrium, the constraint \eqref{eq:general-constraint} will bind; as a result, ``learning as much as possible'' is equivalent to ``spending as much time learning as possible.''\footnote{The authors would like to thank Emir Kamenica for suggesting this interpretation. Note that although time spent and information processed are equivalent when \eqref{eq:general-constraint} binds, they are not equivalent more generally (e.g. if the principal sends no signals to waste time), which is why we describe the equivalence as ``time spent learning.''} We will show in this case that the signals sent by the principal maximize the utility of a hypothetical agent with a lower cost of delay.

A second leading case is one in which $G$ measures only information acquisition that is irrelevant to the agent's decision problem (i.e. it depends only on the marginal posterior of $X_2$). We interpret this case as capturing a multi-dimensional conflict between the principal and agent over both how much the agent should learn and what the agent should learn about. In this case, the signals sent by the principal are not equivalent to those that would be chosen by a hypothetical agent with a lower cost of delay; the principal also distorts the signals so that they are more informative about $X_2$. In the special case in which $H$ is proportional to Shannon's entropy, we derive a sharp result: the principal provides the agent with information about $x_1 \in X_1$ that is identical to what the agent would optimally choose for herself, plus additional information about $x_2\in X_2$ that the agent would never choose to receive.

There are also intermediate situations of interest. The information the principal would like the agent to acquire might be of some use to the agent without being neutral or aligned with what the agent would like to learn (i.e. $G$ can depend in part of the posterior over $X_1$ without being a linear combination of $\hat{u}$ and $H$). Our two leading cases generate sharp predictions precisely because they are extreme.

\section{\label{sec:Optimal-Policy}Optimal Policy}

We start by defining a relaxed version of the principal's problem.
Any probability measure and stopping rule the principal can implement
will induce a probability measure over beliefs the agent will hold
when she chooses to stop (i.e. a law for $q_{\tau}$). Define $\Pi(\bar{q_0}) = \lbrace \pi \in \mathcal{P}(\mathcal{P}(X)): \mathbb{E}^{\pi}[q]=\bar{q}_0 \rbrace$ as the set of probability measures over posterior beliefs that are consistent with the initial prior. Under any feasible policy $(P,\tau)$ chosen by the principal, the law of the stopped belief $q_{\tau}$ induced by this policy will be an element of $\Pi(\bar{q}_0$), by the martingale property of beliefs.

The following lemma describes the date-zero participation constraint
of the agent (i.e., incentive compatibility w.r.t. stopping at time 0) and an upper bound on the total engagement. 
\begin{lem}
\label{lem:neccesary} $\forall(P,\tau)\in\mathcal{A}(\bar{q}_0)$
satisfying the agent's optimal stopping constraint in Definition \ref{def:The-principal's-problem}, the law $\pi$ of the stopped belief $q_{\tau}$ satisfies: 
\begin{enumerate}
\item $\mathbb{E}^{\pi}[\widehat{u}(q)] \geq \kappa \mathbb{E}^{P}[\tau|\mathcal{F}_{0}] + \widehat{u}(\bar{q}_{0})$,
and 
\item $\mathbb{E}^{\pi}[H(q)-H(\bar{q}_{0})]\le\chi\mathbb{E}^{P}[\tau|\mathcal{F}_{0}]$. 
\end{enumerate}
\end{lem}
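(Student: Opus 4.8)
The plan is to establish the two conditions separately, deriving the first purely from the agent's optimality and the second from the submartingale structure of $H(q_t)$ combined with the rate constraint. For part (1), I would exploit the fact that $\tau=0$ is always an admissible stopping time. Stopping immediately yields $\hat u(q_0)=\hat u(\bar q_0)$ with no accrued delay cost, so the optimality requirement on the agent's chosen $\tau$ (the constraint in Definition \ref{def:The-principal's-problem}) forces
\[
\mathbb{E}^P[\hat u(q_\tau)-\bar\kappa\tau\mid\mathcal F_0]\ge\hat u(\bar q_0).
\]
Since $\pi$ is by construction the law of the stopped belief $q_\tau$, we have $\mathbb{E}^P[\hat u(q_\tau)\mid\mathcal F_0]=\mathbb{E}^\pi[\hat u(q)]$, and rearranging gives (1). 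I would also record a consequence used below: because $\hat u$ is continuous on the compact simplex $\mathcal P(X)$ and hence bounded, (1) implies $\bar\kappa\,\mathbb{E}^P[\tau\mid\mathcal F_0]\le\sup_{\mathcal P(X)}\hat u-\hat u(\bar q_0)<\infty$, so $\mathbb{E}^P[\tau\mid\mathcal F_0]<\infty$.

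For part (2), the key observation is that $I_t$ is the compensator of $H(q_t)$. Since $q_t$ is a martingale and $H$ is convex, Jensen's inequality makes $H(q_t)$ a submartingale; the integrated conditional drift defined by \eqref{eq:general-constraint}, namely $\frac{d}{dt}I_t=\limsup_{h\downarrow0}h^{-1}\mathbb{E}^P[H(q_{t+h})-H(q_{t^-})\mid\mathcal F_{t^-}]$, is precisely the predictable increasing part in the Doob--Meyer decomposition of this submartingale. Hence $M_t:=H(q_t)-I_t$ is a martingale with $M_0=H(\bar q_0)-I_0=H(\bar q_0)$. Applying optional stopping at $\tau$ gives $\mathbb{E}^P[H(q_\tau)-I_\tau\mid\mathcal F_0]=H(\bar q_0)$, that is,
\[
\mathbb{E}^\pi[H(q)]-H(\bar q_0)=\mathbb{E}^P[I_\tau\mid\mathcal F_0].
\]
Finally, the constraint $\frac{d}{dt}I_t\le\chi$ together with $I_0=0$ yields $I_\tau=\int_0^\tau\frac{d}{dt}I_t\,dt\le\chi\tau$ pathwise, so taking expectations gives $\mathbb{E}^P[I_\tau\mid\mathcal F_0]\le\chi\,\mathbb{E}^P[\tau\mid\mathcal F_0]$; combining this with the displayed equality yields (2). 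Note that if $I_t$ merely over-compensates (so that $M_t$ is a supermartingale) the argument still goes through with the equality weakened to the inequality needed for (2).

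The main obstacle is the measure-theoretic justification of the middle step: I must verify that the object defined by the $\limsup$ in \eqref{eq:general-constraint} genuinely coincides with the compensator of $H(q_t)$ when beliefs follow a general c\`adl\`ag martingale with jumps, and that optional stopping is legitimate for a stopping time $\tau$ that need not be bounded. The integrability side is clean: $H$ is bounded on $\mathcal P(X)$ and, by the rate bound and the finiteness of $\mathbb{E}^P[\tau\mid\mathcal F_0]$ established in part (1), $\mathbb{E}^P[I_\tau\mid\mathcal F_0]\le\chi\,\mathbb{E}^P[\tau\mid\mathcal F_0]<\infty$, which controls the stopped process. The compensator identification itself I would handle by invoking the construction in \citet{hebert2019rational}, where the information process $I_t$ is set up precisely so that $H(q_t)-I_t$ is a martingale, so that the decomposition is available by definition rather than re-derived here.
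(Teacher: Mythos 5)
Your proof is correct and follows essentially the same route as the paper's: part (1) is identical (the deviation $\tau'\equiv 0$ plus optimality of $\tau$), and part (2) rests on the same chain $\mathbb{E}^{\pi}[H(q)]-H(\bar{q}_{0})\le\mathbb{E}^{P}[I_{\tau}|\mathcal{F}_{0}]\le\chi\mathbb{E}^{P}[\tau|\mathcal{F}_{0}]$. The only difference is one of completeness: the paper's proof of (2) consists solely of the second inequality ($\mathrm{d}I_{t}\le\chi\,\mathrm{d}t$ integrated up to $\tau$), leaving the first link---that $I$ compensates $H(q_{t})$, so that $\mathbb{E}^{\pi}[H(q)]-H(\bar{q}_{0})$ is controlled by $\mathbb{E}^{P}[I_{\tau}|\mathcal{F}_{0}]$---implicit in the definition \eqref{eq:general-constraint}, whereas you make that link explicit via the Doob--Meyer/optional-stopping argument together with the integrability check $\mathbb{E}^{P}[\tau|\mathcal{F}_{0}]<\infty$ deduced from part (1). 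Your closing remark that a supermartingale property of $H(q_{t})-I_{t}$ (over-compensation) suffices is also the right observation, since only the one-sided inequality is needed for the lemma.
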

\begin{proof}
    See Appendix \ref{ssec:proof:necessary}.
\end{proof}

Lemma \ref{lem:neccesary} presents a necessary condition for any
admissible policy for the principal in the optimization problem (Definition
\ref{def:The-principal's-problem}). The first condition states that
the agent's optimal stopping utility is weakly greater than the utility
from stopping immediately. The second condition states that the expected cumulative
information acquired by the agent is weakly less than $\chi\mathbb{E}[\tau|\mathcal{F}_{0}]$---the
maximal attention permitted by the information constraint \eqref{eq:general-constraint}.

Combining these two constraints, 
\[
\mathbb{E}^{\pi}[\widehat{u}(q)-\widehat{u}(\bar{q}_{0})]\geq \kappa \mathbb{E}^{P}[\tau|\mathcal{F}_{0}]\geq\frac{ \kappa }{\chi}\mathbb{E}^{\pi}[H(q)-H(\bar{q}_{0})].
\]
Let us define the principal's relaxed optimization problem as a maximization over $\Pi(\bar{q}_0)$ incorporating
only this combined constraint: 
\begin{align}
\bar{J}(\bar{q}_0) & =\sup_{\pi\in \Pi(\bar{q}_0)} \mathbb{E}^\pi[G(q) - G(\bar{q}_0)]\tag{R}\label{eq:p:relaxed}\\
s.t.\  & \frac{ \kappa }{\chi}\mathbb{E}^{\pi}[H(q)-H(\bar{q}_{0})]\le\mathbb{E}^{\pi}[\widehat{u}(q)-\widehat{u}(\bar{q}_{0})].\nonumber 
\end{align}
Because this combined constraint must hold in the original principal's
problem, we must have $\bar{J}(\bar{q}_{0})\ge J(\bar{q}_{0})$.

We assume that (1) it is possible for the principal to benefit from engagement and (2) it is prohibitively costly for the agent to learn enough information to achieve the principal's first best, to the point that
the agent would prefer to learn nothing at all if confronted with
only these two possibilities.

\begin{assumption}
\label{ass:imperfect} The principal can benefit from engagement: $\max_{\pi' \in \Pi(\bar{q}_0)} \mathbb{E}^{\pi'}[G(q)]> G(\bar{q}_0)$. However, it is not possible for the principal to achieve the principal's first best: if $\pi \in \arg \max_{\pi' \in \Pi(\bar{q}_0)} \mathbb{E}^{\pi'}[G(q)]$, then 
\begin{align*}
    \widehat{u}(\bar{q}_{0})-\frac{ \kappa }{\chi}H(\bar{q}_{0})> \mathbb{E}^{\pi}\left[\widehat{u}(q)-\frac{\kappa}{\chi}H(q)\right].
\end{align*}
\end{assumption}
This rules out the possibility that the principal either has no desire for the agent to acquire information or the possibility of satiating this desire. It immediately implies that the constraint in \eqref{eq:p:relaxed} will bind.

Define the function $\mathcal{L}$ as the Lagrangian associated with the dual of \eqref{eq:p:relaxed}, 
\[
\mathcal{L}(\pi,\lambda)= \mathbb{E}^{\pi}\left[\widehat{u}(q)-\frac{ \kappa }{\chi}H(q)+\lambda G(q)\right].
\]

The following proposition characterizes the solutions of \eqref{eq:p:relaxed}.
\begin{prop}\label{prop:existence}
A solution to \eqref{eq:p:relaxed} exists. Additionally,
\begin{itemize}
\item there exists a $\lambda\geq0$ such that if $\pi^*$ is a solution to \eqref{eq:p:relaxed} then  
\[\pi^* \in \arg \max_{\pi\in \Pi(\bar{q}_0)} \mathcal{L}(\pi,\lambda), 
\]
\item if $\pi^*$ is a solution to \eqref{eq:p:relaxed}, then the constraint binds, 
\[\frac{ \kappa }{\chi}\mathbb{E}^{\pi^{*}}[H(q)-H(\bar{q}_{0})]=\mathbb{E}^{\pi^{*}}[\widehat{u}(q)-\widehat{u}(\bar{q}_{0})],
\] 
\item if $\pi^*$ is a solution to \eqref{eq:p:relaxed}, then starting \eqref{eq:p:relaxed} from any belief in the support of $\pi^*$, it is not possible to induce engagement, $\mathbb{E}^{\pi^*}[\bar{J}(q)]=0$, and 
\item at least one solution to \eqref{eq:p:relaxed} has finite support.
\end{itemize}
\end{prop}

\begin{proof}
  See Appendix \ref{ssec:proof:existence}.
\end{proof}

This proposition demonstrates that $\pi^{*}$ is the solution to a
static rational inattention problem, with a general UPS cost function
(of the sort studied by \citet{caplin2017shannon}). Those authors
show that a necessary condition for $\pi^{*}$ is that it \emph{concavifies}
the function $\widehat{u}-\frac{ \kappa }{\chi}H+\lambda G$.\footnote{Related results appear in earlier working papers by those authors
and in the Bayesian persuasion literature.} Here, $\hat{u}$ captures the benefit of information to the agent, $\frac{\kappa}{\chi} H$ captures the costly delay required to acquire the information, and $\lambda G$ captures the benefit of the information acquisition to the principal, converted to the units of the agent's utility at the rate $\lambda$.

Engagement is infeasible ($\bar{J}(\bar{q}_0)=0$) whenever the only
$\pi$ satisfying the constraint are ones with $\mathbb{E}^{\pi}[G(q)-G(\bar{q}_{0})]=0$,
which is to say that it is not possible for the principal to induce
information acquisition in the dimensions he cares about. If $G$
is strictly convex (meaning that all information acquisition is at
least somewhat beneficial to the principal, as in the $G=H$ case),
this implies $\text{Supp}(\pi)=\{\bar{q}_{0}\}$. When $\text{Supp}(\pi)=\{\bar{q}_{0}\}$,
we will say that $\pi^{*}$ is degenerate, and otherwise say that
$\pi^{*}$ is non-degenerate.\footnote{When $G$ is not strictly convex, meaning that there are some dimensions
of potential information acquisition the principal does not value,
it is possible to have $J(\bar{q}_{0})=0$ with a non-degenerate $\pi^{*}$.
However, the information acquired in this case generates no surplus
for either the principal or the agent.}

Let us take as given a solution $\pi^{*}$ to this relaxed problem,
and consider how it might be implemented in an incentive compatible
way in the original principal's problem.

\subsection{Implementation}

Take any non-degenerate $\pi\in \Pi(\bar{q}_0)$, and define the stochastic process
$q_{t}$ as: 
\begin{align}
q_{t}=\bar{q}_{0}+\mathbf{1}_{N_{\alpha}(t)\ge1}\cdot(Q-\bar{q}_{0}),\label{eq:Poisson}
\end{align}
where $Q\in\mathcal{P}(X)$ is a random variable distributed according
to $\pi$ and $N_{\alpha}(t)$ is an independent Poisson counting
process with parameter $\alpha$. Here, $q_{t}$ is a compound Poisson
process that jumps according to $\pi$ at rate $\alpha$. We call
such process $q_{t}$ an $\alpha-$\emph{dilution} of $\pi$.\footnote{\citet{pomatto2018cost} first introduce the notion of dilution. They
define the dilution of an information structure $\pi$ as ``producing
$\pi$ with probability $\alpha$ and uninformative signal with probability
$1-\alpha$.'' (the same notion appeared in \citet{bloedel2020cost}).
Our notion of $\alpha$-dilution is essentially the repetition of
a dilution in continuous time.} 
\begin{prop}
\label{prop:implementation} For all non-degenerate $\pi\in \Pi(\bar{q}_0)$
that satisfy the constraint in \eqref{eq:p:relaxed}, let $\alpha=\frac{\chi}{\mathbb{E}^{\pi}[H(q)-H(\bar{q}_{0})]}$
and let $q_{t}$ be the $\alpha$-dilution of $\pi$. Let $P$ be the law of $q_t$ and let $\tau = \inf\{t>0: q_t \neq \bar{q}_0\}$. Then $(P,\tau)$
is feasible in \eqref{eq:P}
and implements utility level $\mathbb{E}^{\pi}[G(q)-G(\bar{q}_{0})]$. 
\end{prop}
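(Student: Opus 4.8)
The plan is to exhibit the $\alpha$-dilution $q_t$ together with the stopping time $\tau$ equal to the first jump of the Poisson clock $N_{\alpha}$, and to verify in turn that (i) the triple $(P,I,\tau)$ is admissible, (ii) $\tau$ solves the agent's stopping problem, and (iii) the resulting engagement equals $\rho\,\mathbb{E}^{\pi}[H(q)-H(\bar{q}_{0})]$. Admissibility is the routine part. Since $\mathbb{E}^{\pi}[Q]=\bar{q}_{0}$, the single mean-zero jump makes $q_t$ a c\`{a}dl\`{a}g martingale started at $\bar{q}_{0}$, so Bayes-consistency holds. For the capacity constraint I would compute the information rate on the no-jump event: conditional on $q_{t^{-}}=\bar{q}_{0}$, a jump arrives in $[t,t+h]$ with probability $1-e^{-\alpha h}\approx\alpha h$ and then $H(q_{t+h})-H(\bar{q}_{0})=H(Q)-H(\bar{q}_{0})$, so $\frac{\mathrm{d}}{\mathrm{d}t}I_{t}=\alpha\,\mathbb{E}^{\pi}[H(q)-H(\bar{q}_{0})]$, which equals $\chi$ exactly by the definition of $\alpha$. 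Hence $\frac{\mathrm{d}}{\mathrm{d}t}I_{t}=\chi\le\chi$ on $[0,\tau)$ and $(P,I,\tau)\in\mathcal{A}(\bar{q}_{0})$.

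The core step is the agent's optimal stopping. Because the belief is frozen at $\bar{q}_{0}$ until the jump and constant at $Q$ afterward, the filtration carries no payoff-relevant information before the jump beyond ``no jump has yet occurred,'' and waiting past the jump is strictly dominated (constant belief, positive delay cost). Any stopping rule therefore reduces to a threshold $t_{0}\in[0,\infty]$: stop at the jump if it precedes $t_{0}$, else stop at $t_{0}$ with belief $\bar{q}_{0}$. I would write the value of this family,
\[
U(t_{0})=\int_{0}^{t_{0}}\alpha e^{-\alpha t}\big(\mathbb{E}^{\pi}[\hat{u}(q)]-\bar{\kappa}t\big)\,\mathrm{d}t+e^{-\alpha t_{0}}\big(\hat{u}(\bar{q}_{0})-\bar{\kappa}t_{0}\big),
\]
and differentiate to obtain $U'(t_{0})=e^{-\alpha t_{0}}\big(\alpha(\mathbb{E}^{\pi}[\hat{u}(q)]-\hat{u}(\bar{q}_{0}))-\bar{\kappa}\big)$. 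The relaxed-problem constraint in \eqref{eq:p:relaxed}, rewritten via $\bar{\kappa}/\alpha=\tfrac{\bar{\kappa}}{\chi}\mathbb{E}^{\pi}[H(q)-H(\bar{q}_{0})]$, is exactly $\alpha(\mathbb{E}^{\pi}[\hat{u}(q)]-\hat{u}(\bar{q}_{0}))\ge\bar{\kappa}$, so $U'\ge0$ and $U$ is maximized at $t_{0}=\infty$; that is, $\tau$ equal to the first jump is optimal (strictly so when the constraint is slack; under equality the agent is indifferent and follows the recommended rule by the tie-breaking assumption).

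Finally, the engagement follows from $I_{t}=\chi t$ for $t<\tau$, giving $\mathbb{E}^{P}[I_{\tau}\mid\mathcal{F}_{0}]=\chi\,\mathbb{E}^{P}[\tau\mid\mathcal{F}_{0}]=\chi/\alpha=\mathbb{E}^{\pi}[H(q)-H(\bar{q}_{0})]$ (equivalently, $H(q_{t})-H(\bar{q}_{0})-I_{t}$ is a martingale, so optional stopping gives $\mathbb{E}^{P}[I_{\tau}\mid\mathcal{F}_{0}]=\mathbb{E}^{\pi}[H(q)-H(\bar{q}_{0})]$), whence profit $\rho\,\mathbb{E}^{\pi}[H(q)-H(\bar{q}_{0})]$. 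The main obstacle is the optimal-stopping verification: the substantive content is the reduction of an arbitrary adapted (possibly externally randomized) stopping time to the one-parameter threshold family, which rests on the fact that no information flows before the jump, so that on the event $\{T>t\}$ the only $\mathcal{F}_{t}$-measurable sets are trivial and the value is linear in the mixing measure over thresholds. Once that reduction is justified, the sign of $U'$ does all the work and ties optimality of $\tau$ directly to the participation inequality defining the relaxed problem.
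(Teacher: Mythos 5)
Your proposal is correct, and it establishes the same three things the paper's proof does (feasibility, agent optimality, engagement value) using the same essential ingredient: the relaxed-problem constraint, rewritten as $\alpha\big(\mathbb{E}^{\pi}[\hat{u}(q)]-\hat{u}(\bar{q}_{0})\big)\ge\bar{\kappa}$, is exactly what makes waiting for the jump optimal. Where you differ is in how the optimal-stopping verification is executed. The paper takes an arbitrary admissible stopping time $\tau'$ and compares it to $\tau$ directly, splitting the expectation over the events $\{\tau'<\tau\}$ and $\{\tau'\ge\tau\}$, invoking the relaxed constraint on the first event and the memorylessness of the exponential clock to re-assemble the bound into $\mathbb{E}^{P}[\hat{u}(q_{\tau})-\bar{\kappa}\tau\,|\,\mathcal{F}_{0}]$. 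You instead first reduce every stopping rule to the one-parameter threshold family (stop at the jump or at a deterministic $t_{0}$, whichever comes first), then compute $U(t_{0})$ in closed form and sign its derivative. The two routes trade off as follows: your reduction makes explicit a measurability fact --- that the pre-jump filtration is trivial, so any stopping rule restricted to the no-jump event is a deterministic time --- which the paper's proof uses only implicitly (its step replacing $\mathbb{E}[\hat{u}(q_{\tau})\,|\,\tau'\ge\tau]$ by $\mathbb{E}^{\pi}[\hat{u}(q)]$ requires that the event $\{\tau'\ge\tau\}$ be independent of $Q$, which is exactly this triviality). Once the reduction is granted, your calculus argument is arguably more transparent than the paper's chain of conditional-expectation manipulations; conversely, the paper's argument never needs to formalize the reduction, since it works with $\tau'$ directly. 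You also correctly reproduce the paper's side remarks: strict optimality when the constraint is slack, indifference plus the tie-breaking assumption when it binds, and the engagement computation $\mathbb{E}^{P}[I_{\tau}\,|\,\mathcal{F}_{0}]=\chi/\alpha=\mathbb{E}^{\pi}[H(q)-H(\bar{q}_{0})]$. The one place to tighten your write-up, if it were to replace the paper's proof, is the reduction step itself: you flag it as the main obstacle and give the right reason, but a complete argument should state that for any stopping time $\tau'$ of the natural filtration there exists a deterministic $t_{0}\in[0,\infty]$ with $\tau'\wedge\tau=t_{0}\wedge\tau$ almost surely, and that stopping strictly dominates continuing after the jump, so that $U(t_{0})$ indeed bounds the value of every admissible rule.
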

\begin{proof}
See Appendix \ref{ssec:proof:implementation}.
\end{proof}
Note that Proposition \ref{prop:implementation} immediately implies that $J=\bar{J}$; hence, we will not distinguish between the two functions. Combining Lemma \ref{lem:neccesary}, Proposition \ref{prop:existence},
and Proposition \ref{prop:implementation}, we obtain the main characterization
of the optimal policy: 
\begin{thm}
\label{thm:main-result}$\forall\bar{q}_{0}\in\mathcal{P}(X)$, there
exists a $\pi^{*}\in \Pi(\bar{q}_0)$ with finite support
solving \eqref{eq:p:relaxed}. If Supp$(\pi^{*})=\{\bar{q}_{0}\}$,
any feasible policy is optimal.
Otherwise, let $\alpha^{*}=\frac{\chi}{\mathbb{E}^{\pi^{*}}[H(q)-H(\bar{q}_{0})]}$,
and let $(P^{*},\tau^{*})$ be the law and jumping time of the $\alpha^{*}$-dilution
of $\pi^{*}$. Then, $(P^{*},\tau^{*})$ solves the principal's problem \eqref{eq:P}. 
\end{thm}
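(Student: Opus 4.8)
The plan is to assemble the three preceding results into a tightness argument: Lemma~\ref{lem:neccesary} shows the relaxed value dominates the original, $J(\bar q_0)\le\bar J(\bar q_0)$; Proposition~\ref{prop:existence} supplies a finite-support maximizer $\pi^*$ of \eqref{eq:p:relaxed}; and Proposition~\ref{prop:implementation} exhibits a policy feasible in Definition~\ref{def:The-principal's-problem} whose engagement equals the relaxed objective evaluated at $\pi^*$. Since that relaxed objective is exactly $\bar J(\bar q_0)$, attaining it with a feasible policy forces $J(\bar q_0)=\bar J(\bar q_0)$ and certifies optimality. The only genuine content beyond invoking these results is confirming that the relaxation is \emph{tight}, which is precisely what the dilution delivers.

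First I would invoke Proposition~\ref{prop:existence} to fix a finite-support solution $\pi^*$ of \eqref{eq:p:relaxed}, so that $\bar J(\bar q_0)=\rho\,\mathbb{E}^{\pi^*}[H(q)-H(\bar q_0)]$, and then split on degeneracy. If $\mathrm{Supp}(\pi^*)=\{\bar q_0\}$, then $\bar J(\bar q_0)=0$. The policy that sends no information and stops at $\tau=0$ is feasible with engagement $0$, and every feasible policy has $I$ nondecreasing from $I_0=0$, hence $\rho\,\mathbb{E}^{P}[I_\tau\mid\mathcal{F}_0]\ge 0$; combined with $J(\bar q_0)\le\bar J(\bar q_0)=0$ this pins $J(\bar q_0)=0$ and confines every feasible policy to the value $0$, so each is optimal. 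Moreover $\mathbb{E}^{P}[I_\tau\mid\mathcal{F}_0]=0$ with $I$ nondecreasing forces the belief to remain at $\bar q_0$ up to $\tau$ (by strict convexity of $H$, no entropy can be accumulated), so the agent strictly prefers to stop at once since $\bar\kappa>0$; this is the first assertion.

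In the non-degenerate case, $\pi^*\neq\delta_{\bar q_0}$ and strict convexity of $H$ give $\mathbb{E}^{\pi^*}[H(q)]>H(\bar q_0)$ via Jensen's inequality, so $\alpha^*=\chi/\mathbb{E}^{\pi^*}[H(q)-H(\bar q_0)]$ is finite and positive, and $\pi^*$ satisfies the constraint of \eqref{eq:p:relaxed}. Proposition~\ref{prop:implementation} then applies to $\pi^*$: the $\alpha^*$-dilution $(P^*,I^*_t=\chi t,\tau^*)$ is feasible and implements engagement $\rho\,\mathbb{E}^{\pi^*}[H(q)-H(\bar q_0)]=\bar J(\bar q_0)$, whence $J(\bar q_0)\ge\bar J(\bar q_0)$. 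With the reverse inequality from Lemma~\ref{lem:neccesary}, I conclude $J(\bar q_0)=\bar J(\bar q_0)$ and that the dilution attains it, proving optimality.

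The main obstacle has already been discharged inside Proposition~\ref{prop:implementation}, namely the verification that the dilution's stopping time $\tau^*$ is optimal for the agent. In the synthesis, the one point to watch is the chain of equalities $\rho\,\mathbb{E}^{P^*}[I^*_{\tau^*}\mid\mathcal{F}_0]=\rho\chi\,\mathbb{E}^{P^*}[\tau^*\mid\mathcal{F}_0]=\rho\,\mathbb{E}^{\pi^*}[H(q)-H(\bar q_0)]$, which uses $I^*_t=\chi t$ together with the Poisson mean $\mathbb{E}^{P^*}[\tau^*\mid\mathcal{F}_0]=1/\alpha^*$. This is exactly the statement that the information bound of Lemma~\ref{lem:neccesary} part~(2) holds with equality along the dilution path, and hence that passing to the relaxed problem loses nothing.
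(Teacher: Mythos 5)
Your proposal is correct and follows essentially the same route as the paper, which obtains the theorem precisely by combining Lemma \ref{lem:neccesary} (the relaxation upper bound $J(\bar{q}_{0})\le\bar{J}(\bar{q}_{0})$), Proposition \ref{prop:existence} (existence of a finite-support $\pi^{*}$), and Proposition \ref{prop:implementation} (the $\alpha^{*}$-dilution attains $\rho\,\mathbb{E}^{\pi^{*}}[H(q)-H(\bar{q}_{0})]=\bar{J}(\bar{q}_{0})$, closing the gap). Your explicit treatment of the degenerate case (all feasible policies yield zero engagement, and strict convexity of $H$ forces beliefs to stay at $\bar{q}_{0}$, so the agent stops at once) merely fills in details the paper leaves implicit.
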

There are generally many optimal policies in the principal's problem.
First, there may be multiple $\pi^{*}$ that solve the relaxed principal's
problem (although uniqueness in static rational inattention problems
is guaranteed under certain additional assumptions). Second, there
are many stochastic processes $q_{t}$ (equivalently, laws $P$) and
stopping times $\tau$ that induce the same law for $q_{\tau}$; provided
that this law is equal to $\pi^{*}$ and that incentive compatibility
and the bounds on information acquisition are satisfied, all such
policies are optimal. However, it is not the case that anything goes,
as we show in \cref{sec:Dynamics}.

An immediate implication of Theorem \ref{thm:main-result} is that
the agent's participation constraint binds. That is, 
\[
\mathbb{E}^{P^*}[\widehat{u}(q_{\tau^{*}})- \kappa \tau^{*}|\mathcal{F}_{0}]=\widehat{u}(\bar{q}_{0}).
\]
Strikingly, the agent is no better off receiving information from
an engagement-maximizing principal than if she could not receive any
information at all. The principal extracts the full surplus generated
by the ability to produce information and apply it in the agent's
decision problem, despite the agent's ability to choose when to stop
and act. The principal, in maximizing the engagement of the agent,
minimizes the agent's welfare subject to a participation constraint. Note, however, that the principal's ability to extract all of the surplus depends critically on the possibility of jumps in the belief process. We provide an example in Proposition \ref{prop:bound:binary} in which beliefs are required to be continuous and the agent-optimal outcome is attained.

\subsection{The agent-optimal benchmark and extreme beliefs}

We will compare our results to a benchmark in which the agent and
not the principal chooses the probability space and martingale belief
process (subject to \eqref{eq:general-constraint}). This benchmark
is a special case of the more general models described in \citet{hebert2019rational}
and \citet{zhong2017jumps}.

Those authors show that the optimal policies of this benchmark dynamic
model are also equivalent to the solution to a static rational inattention
problem. That is, under the agent-optimal policies $(P_A,\tau_A)$, given the initial belief
$\bar{q}_{0}\in\mathcal{P}(X)$, 
\begin{align}
\mathbb{E}^{P_A}[\widehat{u}(q_{\tau_A})- \kappa \tau_A|\mathcal{F}_{0}] & =V^{B}(\bar{q}_{0})\nonumber\\
 & =\max_{\pi\in \Pi(\bar{q}_0)}\mathbb{E}^{\pi}[\widehat{u}(q)-\frac{ \kappa }{\chi}(H(q)-H(\bar{q}_{0}))].\tag{B}\label{eq:agent-bench}\nonumber
\end{align}
This solution can be implemented, as above, by a compound Poisson
process, but also by a diffusion (and many other processes). The following characterization is straightforward.

\begin{lem}\label{prop:agent}
If the agent would choose to acquire information in the agent-optimal benchmark, the principal will induce engagement: for all $q \in \mathcal{P}(X)$, if $V^B(q)>\widehat{u}(q)$ then $\bar{J}(q)>0$. 
    
Moreover, if $\pi_A$ is a solution to \eqref{eq:agent-bench}, then starting \eqref{eq:agent-bench} from any belief in the support of $\pi_A$, it is weakly optimal for the agent to stop, $\mathbb{E}^{\pi_A}[V^B(q)-\hat{u}(q)]=0$. 

\end{lem}
\begin{proof}
    See Appendix \ref{ssec:proof:agent}.
\end{proof}

This lemma highlights two points. First, if the agent would choose to acquire information in the agent-optimal benchmark, the principal can profit by providing the agent with information (although that information will not in general be the information the agent would have chosen to acquire). Second, at any belief the agent might hold when stopping in the agent-optimal benchmark, the agent must have no benefit from continuing to acquire information.

The static rational inattention problem that characterizes stopping beliefs in the agent-optimal benchmark is essentially identical
to the one that characterizes optimal policy in our principal-agent
framework, except that the UPS information cost is $\frac{ \kappa }{\chi}H(q)$
in the benchmark problem and $\frac{ \kappa }{\chi}H(q)-\lambda G(q)$
in the principal agent case. The similarity in structure between these
two problems will allow us to highlight the implications of engagement
maximization.

Let $Q^{i}(\bar{q}_{0})\subseteq\mathcal{P}(X)$ be the union of the
support of all optimal stopping beliefs in the benchmark ($i=a$)
and principal-agent ($i=p$) models. Let $\mathrm{Conv} \ Q^{i}(\bar{q}_{0})$
denote the convex hull of $Q^{i}(\bar{q}_{0})$. The following proposition
demonstrates that the beliefs the agent will hold after
engaging with the principal are more extreme than the beliefs the
agent would choose to acquire in the benchmark model. This result
follows from the observation that in both models, stopping beliefs
are characterized by the solution to a static rational inattention
problem, with a lower information cost in the principal-agent case
than in the benchmark case. There are three possible exceptions to this
result. One is the case in which the principal can provide no strict incentive via information ($V^B(\bar
q_0)=0$), in which case $Q^{p}(\bar{q}_{0})\subset Q^a(\bar{q}_0)$. The second is the
case in which the agent-optimal beliefs lie on the boundary of the simplex,
in which case there may not be ``room'' for beliefs to become more
extreme. The third is the possibility that the marginal benefit to the principal of additional information acquisition jumps downward exactly at the agent-optimal posterior beliefs, which can be ruled out by assuming $G$ is differentiable.
\begin{prop}
\label{prop:extreme-beliefs} Assume that $G$ is continuously differentiable, $ V^B (\bar{q}_0)>\widehat{u}(\bar{q}_0)$, and
that $Q^{a}(\bar{q}_{0})$ is a subset of the relative interior of
$\mathcal{P}(X)$. Then for all $q\in Q^{p}(\bar{q}_{0}),q\not\in \ \mathrm{Conv} \ Q^{a}(\bar{q}_{0})$.
\end{prop}
\begin{proof}
See the appendix, section \ref{subsec:proof:extreme-belief}. 
\end{proof}

That is, the stopping beliefs in the principal-agent problem will lie outside the convex hull of the stopping beliefs of the agent-optimal benchmark. When $G$ is strictly convex, it is also possible to show that the stopping beliefs in the agent-optimal benchmark are in the relative interior of the convex hull of the stopping beliefs in the principal-agent problem. However, when the principal profits only by providing the agent with irrelevant information, $G$ is not strictly convex, and the agent-optimal benchmark stopping beliefs will not necessarily lie in the relative interior. Figure \ref{fig:6} below provides an example of this possibility.

Extreme beliefs are a natural consequence of engagement maximization. By forcing the agent to ultimately acquire more information
than she would choose for herself (the extreme beliefs), the principal
can simultaneously delay the agent's stopping decision while providing
information the agent is willing to attend to. Interpreted in the
context of an internet platform and a user, our
extreme beliefs result implies that the platform should provide in-depth
content on a narrower set of topics than would be preferred by the
user. We can interpret this loosely as encouraging the user to go
down ``rabbit holes'' that lead to extreme beliefs. In the teacher-student context, extreme beliefs have a more benign interpretation: the teacher induces test-motivated students to learn more than they would choose to learn on their own.

\subsubsection{Leading cases}\label{sssec:leading:case}

\cref{prop:extreme-beliefs} establishes that engagement maximization necessarily leads to more extreme stopping beliefs relative to the agent optimal benchmark. In what follows, we provide a more detailed analysis of the direction towards which engagement maximization biases the beliefs in two salient settings. To help interpretation, we assume that the state space has a product structure, $X=X_1 \times X_2$, and that the agent's decision problem $u$ depends only on $x_1 \in X_1$. We further suppose that $x_1$ and $x_2$ are independent under the prior $\bar{q}_0$, so that an agent who receives signals that depend only on $x_1$ will not update about $x_2$. Thirdly, we assume throughout this subsection that $H$ is the negative of Shannon's entropy $H^S$. 

\paragraph{All Engagement is Profitable.}

When the principal benefits from all forms of engagement by the agent (our interpretation of the $G=H$ case), it is never worthwhile for the principal to provide the agent with decision-irrelevant information. Instead, the principal induces the agent to acquire more decision-relevant information than the agent would choose to acquire on her own.

The following result shows that this is what happens in the principal-agent problem-- the principal will send the agent information about $x_1$ only. This occurs even though the principal would benefit from the agent learning about $x_2$ as well. Intuitively, signals about $x_1$ and $x_2$ that benefit the principal equally take an equal amount of time for the agent to process (because $G=H$), but only the former has a utility benefit for the agent. A principal who was sending signals about $x_2$ could switch to sending signals about $x_1$, relax the agent's participation constraint,\footnote{This argument depends on the assumption that sending a signal conditioned only on $x_1$ is least costly way to communicate information about $x_1$; that $H$ is proportional to Shannon's entropy ensures this.} and then profit by adjusting the signals so that they induced even more extreme beliefs about $x_1$ for the agent.

\begin{prop}\label{prop:attention}
    Suppose $G=H$. Then, \eqref{eq:p:relaxed} is solved by $\pi$ that reveals $x_1$ only, i.e. for all $q \in \text{Supp}(\pi)$  the marginals of $q$ and $\bar{q}_0$ on $X_2$ are identical.
\end{prop}
\begin{proof}
    See Appendix \ref{ssec:proof:attention}.
\end{proof}

\paragraph{Decision-Irrelevant Engagement is Profitable.}

Let us now consider a different leading case, in which the information that benefits the principal is irrelevant to the agent's decision. Maintaining the assumption of a product structure for the state space, an independent prior, and that $H$ is proportional to Shannon's entropy, we now suppose that $G$ depends only on $x_2$ (recall that $\widehat{u}$ depends only on $x_1$).

\begin{prop}\label{prop:noise}
   Suppose $G$ depends on $x_2$ only. Then, \eqref{eq:p:relaxed} is solved by $\pi=\pi^*_1\otimes \pi_2$, where $\pi^*_1\in \mathcal{P}(\mathcal{P}(X_1))$ is optimal for the agent and $\pi_2\in\mathcal{P}(\mathcal{P}(X_2))$.
\end{prop}
\begin{proof}
    See Appendix \ref{ssec:proof:noise}.
\end{proof}

When the principal cares only about providing decision-irrelevant information, the principal will provide a mix of decision-relevant and decision-irrelevant information. The decision-relevant information will be exactly the information that the agent would have chosen for herself, but the agent will be unable to avoid learning the decision-irrelevant information that is ``mixed in'' with the decision-relevant information.\footnote{We conjecture that the property required for the result is the $R$-mononticity property discussed in \cite{hebert2023information}.}

\subsection{Illustrative Examples\label{subsec:Examples}}

\paragraph*{Example 1.} We illustrate our results thus far in a
simple example. An agent faces a choice between two actions, $A=\{l,r\}$.
The payoffs from the actions are uncertain and depend on the state
of the world $x\in X=\{L,R\}$. The agent assigns equal prior probability
to both states ($\bar{q}_0$ is uniform). The agent gets utility one when the chosen action
matches the state ($u_{l,L}=u_{r,R}=1$) and utility negative one otherwise
($u_{l,R}=u_{r,L}=-1$). The agent is impatient and pays a constant cost
of delay of two utils per unit of time ($ \kappa =2$). We assume that the information processing constraint $H$ is the negative of Shannon's entropy $H^S$, and that the engagement measurement function $G$ is also the negative of Shannon's entropy. We assume $\chi=1$.

By Theorem \ref{thm:main-result}, the principal's optimal strategy maximizes 
\begin{align*}
    \mathbb{E}^{\pi}[\widehat{u}(q)-(2-\lambda)(H^S(0.5)-H^S(q))]
\end{align*}
for some $\lambda\in(0,2)$. Meanwhile, the agent's optimal strategy maximizes
\begin{align*}
    \mathbb{E}^{\pi}[\widehat{u}(q)-2(H^S(0.5)-H^S(q))].
\end{align*}
Both problems are static rational inattention problems. An application of \cite{matejka2015rational} implies that the agent's optimal stopping belief has support $\{\frac{e}{e+1},\frac{1}{e+1}\}$ and the principal's optimal stopping belief has support $\Big\{\frac{e^{2/(2-\lambda)}}{e^{2/(2-\lambda)+1}},\frac{1}{e^{2/(2-\lambda)}+1}\Big\}$. Both probability measures of stopping beliefs can be implemented by dilutions (compound Poisson processes). Since $\lambda\in(0,1)$, it is easy to see that the principal induces more extreme posterior beliefs and a longer wait compared to the agent-optimal benchmark.

Next, we vary the agent's prior belief $\bar{q}$, and focus our analysis on the agent's posteriors. Figure \ref{fig:1} illustrates the
optimal $q_{\tau}$ for every prior belief: When $\bar{q}_{0}\le\underline{q}$
or $\bar{q}_{0}\ge\bar{q}$, the agent stops immediately no matter
what information she will receive. When $\bar{q}_{0}\in(\underline{q},\bar{q})$,
the dashed lines illustrate the support of the agent-optimal policy.
The lines are flat, meaning that the agent's stopping beliefs do not
change with the prior belief, consistent with the prediction of the
standard RI models.\footnote{This property is known as the ``locally invariant posteriors'' (\citet{caplin2017shannon}).}
The solid curves illustrate the support of the unique optimal $\pi^{*}$.
They are closer to the boundaries zero and one, indicating that the
posterior beliefs become more extreme under the principal's optimal
policy. 
\begin{figure}[htbp]
\centering 
\tikzset{every picture/.style={line width=0.75pt}} 

\begin{tikzpicture}[x=0.75pt,y=0.75pt,yscale=-1,xscale=1]

\draw    (70,250) -- (380,250) ;
\draw    (70,250) -- (70,50) ;
\draw    (70,250) -- (370,70) ;
\draw    (70,50) -- (76,50) ;
\draw    (70,130) -- (76,130) ;
\draw    (70,210) -- (76,210) ;
\draw    (70,170) -- (76,170) ;
\draw    (70,90) -- (76,90) ;
\draw [line width=1.5]  [dash pattern={on 5.63pt off 4.5pt}]  (158.17,197.75) -- (313.67,197.75) ;
\draw [line width=1.5]  [dash pattern={on 5.63pt off 4.5pt}]  (158.17,102.25) -- (315.67,102.25) ;
\draw [line width=1.5]    (158.17,102.25) .. controls (180.17,87.75) and (212.8,72.5) .. (235.8,66) ;
\draw [line width=1.5]    (235.8,66) .. controls (257.8,71) and (293.17,86.25) .. (315.67,102.25) ;
\draw [line width=1.5]    (313.67,197.75) .. controls (291.48,211.96) and (258.48,228.61) .. (235.4,234.8) ;
\draw [line width=1.5]    (235.4,234.8) .. controls (213.47,229.51) and (180.2,214.5) .. (157.91,198.21) ;
\draw    (155.83,243.25) -- (155.83,249.33) ;
\draw    (314.83,243.75) -- (314.83,249.83) ;

\draw (382,240.4) node [anchor=north west][inner sep=0.75pt]    {$\overline{q}_{0}$};
\draw (52,22.4) node [anchor=north west][inner sep=0.75pt]    {$q_{\tau }$};
\draw (46,202.4) node [anchor=north west][inner sep=0.75pt]    {$0.2$};
\draw (46,162.4) node [anchor=north west][inner sep=0.75pt]    {$0.4$};
\draw (46,122.4) node [anchor=north west][inner sep=0.75pt]    {$0.6$};
\draw (46,82.4) node [anchor=north west][inner sep=0.75pt]    {$0.8$};
\draw (46,42.4) node [anchor=north west][inner sep=0.75pt]    {$1.0$};
\draw (56,243.4) node [anchor=north west][inner sep=0.75pt]    {$0$};
\draw (308.83,253.57) node [anchor=north west][inner sep=0.75pt]    {$\overline{q}$};
\draw (150.83,253.4) node [anchor=north west][inner sep=0.75pt]    {$\underline{q}$};

\end{tikzpicture}
\vspace{-2em}
\caption{$\text{Supp}(q_{\tau})$ as a correspondence of $\bar{q}_{0}$}
\label{fig:1} 
\end{figure}

Figure \ref{fig:2} illustrates the engagement level $\mathbb{E}^{P}[G(q_\tau)-G(\bar{q}_0)]$ for every prior
belief. The dashed curves represent the agent-optimal policy and the
solid curves represent the principal-optimal policy. The principal-optimal
policy induces higher engagement level than the agent-optimal policy.
However, the engagement level converges to zero when the prior belief
goes to the boundary of the continuation region $(\underline{q},\bar{q})$.
\begin{figure}[htbp]
\centering 
\tikzset{every picture/.style={line width=0.75pt}} 

\begin{tikzpicture}[x=0.75pt,y=0.75pt,yscale=-1,xscale=1]

\draw    (70,250) -- (380,250) ;
\draw    (70,250) -- (70,50) ;
\draw    (70,50) -- (76,50) ;
\draw    (70,130) -- (76,130) ;
\draw    (70,210) -- (76,210) ;
\draw    (70,170) -- (76,170) ;
\draw    (70,90) -- (76,90) ;
\draw    (155.83,243.58) -- (155.83,249.67) ;
\draw    (314.83,244.08) -- (314.83,250.17) ;
\draw [line width=1.5]  [dash pattern={on 5.63pt off 4.5pt}]  (155.83,249.67) .. controls (190.67,180.58) and (280.67,180.58) .. (314.83,250.17) ;
\draw [line width=1.5]    (155.83,249.67) .. controls (176.17,180.75) and (195.67,138.25) .. (235.17,75.25) ;
\draw [line width=1.5]    (314.83,250.17) .. controls (294.67,180.75) and (276.67,140.25) .. (235.17,75.25) ;

\draw (382,240.4) node [anchor=north west][inner sep=0.75pt]    {$\overline{q}_{0}$};
\draw (36,22.4) node [anchor=north west][inner sep=0.75pt]    {$Engagement$};
\draw (46,202.4) node [anchor=north west][inner sep=0.75pt]    {$0.1$};
\draw (46,162.4) node [anchor=north west][inner sep=0.75pt]    {$0.2$};
\draw (46,122.4) node [anchor=north west][inner sep=0.75pt]    {$0.3$};
\draw (46,82.4) node [anchor=north west][inner sep=0.75pt]    {$0.4$};
\draw (46,42.4) node [anchor=north west][inner sep=0.75pt]    {$0.5$};
\draw (56,243.4) node [anchor=north west][inner sep=0.75pt]    {$0$};
\draw (308.83,253.9) node [anchor=north west][inner sep=0.75pt]    {$\overline{q}$};
\draw (150.83,248.73) node [anchor=north west][inner sep=0.75pt]    {$\underline{q}$};

\end{tikzpicture}
\vspace{-2em}
\caption{Engagement as a function of $\bar{q}_{0}$}
\label{fig:2} 
\end{figure}

\paragraph{Example 2: Leading Cases.}

We next adapt example 1 to illustrate our two leading cases, in the context of engaging test-motivated students. Suppose the true state
of the world $T$ consists of both a test-relevant and test-irrelevant
dimension, $T=T_{1}\times T_{2}=\{L,R\}\times\{0,1\}$. The student
and teacher know that the student will be asked a single question,
$Q=\{Q_{0}\}$, whose responses are $A=\{l,r\}$, with $Q_{0}(L0\}=Q_{0}(L1)=l$
and $Q_{0}(R0)=Q_{0}(R1)=r$. That is, the $\{L,R\}$ component of
the true state is relevant, and the $\{0,1\}$ component is irrelevant.
The student (agent) again faces a binary choice $A=\{l,r\}.$ The
student's utility is one if she answers the question correctly and
negative one otherwise, and therefore is identical (ignoring the test-irrelevant
dimension) to that in example 1.
The prior belief is uniform. We continue to assume the agent's information processing
capacity is defined by the negative of Shannon's entropy, and continue to use the parameters $ \kappa =2,\chi=1$.

\paragraph{Leading Case: $H=G$.}

We first study the leading case of $H=G$, and illustrate Proposition \ref{prop:attention} with a specific example. Like Example 1, both the principal's and the agent's optimization problems are equivalent to static rational inattention problems. By re-parameterizing the problems using the condition probability of the chosen action, both problems can be rewritten as:
\begin{align*}
\sup_{P(a|t_{1},t_{2})}\sum_{a,t_{1},t_{2}}\frac{1}{4}u(a,t_{1})P(a|t_{1},t_{2})-C\cdot\bigg(&\sum_{a}\sum_{t_{1},t_{2}}\frac{1}{4}P(a|t_{1},t_{2})\log(P(a|t_{1},t_{2})\\
&-\sum_{a}\bigg(\sum_{t_{1},t_{2}}\frac{1}{4}P(a|t_{1},t_{2})\bigg)\log\bigg(\sum_{t_{1},t_{2}}\frac{1}{4}P(a|t_{1},t_{2})\bigg)\bigg),
\end{align*}
for $C=2$ (the agent-optimal problem) and $C=2-\lambda$ (the principal-optimal problem). Note that replacing $P(a|t_{1},t_{2}=0)$
and $P(a|t_{1},t_{2}=1)$ with $\frac{P(a|t_{1},0)+P(a|t_{1},1)}{2}$
(denoted by $P(a|t_{1})$) does not change the positive term while
strictly reduces the negative term if $P(a|t_{1},0)$ and $P(a|t_{2},1)$
are not identical. Evidently, w.l.o.g., the optimization problem reduces
to

\begin{align*}
\sup_{P(a|t_{1})}\sum_{a,t_{1}}\frac{1}{2}u(a,t_{1})P(a|t_{1})-C\cdot\bigg(&\sum_{a,t_{1}}\frac{1}{2}P(a|t_{1})\log(P(a|t_{1}))\\
&-\sum_{a,t_{1}}\frac{1}{2}P(a|t_{1})\log\bigg(\sum_{t_{1}}\frac{1}{2}P(a|t_{1})\bigg)\bigg),
\end{align*}
which is equivalent to a static rational inattention problem with $t_1$ being the only unknown state.\footnote{This equivalence is special to mutual information and is the implication
of a more general ``compression invariance'' property introduced
by \citet{caplin2017shannon,bloedel2020cost}.} As shown in Proposition \ref{prop:attention}, when there is a test-irrelevant state that is costly to learn about, no information about the state will
be acquired either in the student-optimal benchmark or in the principal-agent
problem, even though the student learning about state
$t_{2}$ enters the teacher's payoff function.

\begin{figure}
\centering 
\tikzset{every picture/.style={line width=0.75pt}} 

\begin{tikzpicture}[x=0.75pt,y=0.75pt,yscale=-1,xscale=1]

\draw    (29.56,206.16) -- (29.56,96.96) ;
\draw    (209.96,205.96) -- (159.96,70.16) ;
\draw [line width=0.75]    (29.56,96.96) -- (159.96,70.16) ;
\draw    (29.56,96.96) -- (209.96,205.96) ;
\draw    (29.56,206.16) -- (209.96,205.96) ;
\draw  [draw opacity=0][fill={rgb, 255:red, 177; green, 208; blue, 244 }  ,fill opacity=1 ] (94.76,83.56) -- (119.76,151.46) -- (119.76,206.06) -- (94.76,138.16) -- cycle ;
\draw    (29.56,206.16) -- (159.96,70.16) ;
\draw  [draw opacity=0][fill={rgb, 255:red, 208; green, 2; blue, 27 }  ,fill opacity=1 ] (107.26,144.81) .. controls (107.26,143.72) and (108.15,142.83) .. (109.24,142.83) .. controls (110.33,142.83) and (111.22,143.72) .. (111.22,144.81) .. controls (111.22,145.9) and (110.33,146.79) .. (109.24,146.79) .. controls (108.15,146.79) and (107.26,145.9) .. (107.26,144.81) -- cycle ;
\draw  [draw opacity=0][fill={rgb, 255:red, 0; green, 122; blue, 255 }  ,fill opacity=1 ] (71.66,148.21) .. controls (71.66,147.12) and (72.55,146.23) .. (73.64,146.23) .. controls (74.73,146.23) and (75.62,147.12) .. (75.62,148.21) .. controls (75.62,149.3) and (74.73,150.19) .. (73.64,150.19) .. controls (72.55,150.19) and (71.66,149.3) .. (71.66,148.21) -- cycle ;
\draw  [draw opacity=0][fill={rgb, 255:red, 0; green, 122; blue, 255 }  ,fill opacity=1 ] (140.46,142.01) .. controls (140.46,140.92) and (141.35,140.03) .. (142.44,140.03) .. controls (143.53,140.03) and (144.42,140.92) .. (144.42,142.01) .. controls (144.42,143.1) and (143.53,143.99) .. (142.44,143.99) .. controls (141.35,143.99) and (140.46,143.1) .. (140.46,142.01) -- cycle ;
\draw  [draw opacity=0][fill={rgb, 255:red, 126; green, 211; blue, 33 }  ,fill opacity=1 ] (40.86,150.61) .. controls (40.86,149.52) and (41.75,148.63) .. (42.84,148.63) .. controls (43.93,148.63) and (44.82,149.52) .. (44.82,150.61) .. controls (44.82,151.7) and (43.93,152.59) .. (42.84,152.59) .. controls (41.75,152.59) and (40.86,151.7) .. (40.86,150.61) -- cycle ;
\draw  [draw opacity=0][fill={rgb, 255:red, 126; green, 211; blue, 33 }  ,fill opacity=1 ] (166.26,139.41) .. controls (166.26,138.32) and (167.15,137.43) .. (168.24,137.43) .. controls (169.33,137.43) and (170.22,138.32) .. (170.22,139.41) .. controls (170.22,140.5) and (169.33,141.39) .. (168.24,141.39) .. controls (167.15,141.39) and (166.26,140.5) .. (166.26,139.41) -- cycle ;
\draw  [draw opacity=0][fill={rgb, 255:red, 208; green, 2; blue, 27 }  ,fill opacity=1 ] (238.76,100.31) .. controls (238.76,99.22) and (239.65,98.33) .. (240.74,98.33) .. controls (241.83,98.33) and (242.72,99.22) .. (242.72,100.31) .. controls (242.72,101.4) and (241.83,102.29) .. (240.74,102.29) .. controls (239.65,102.29) and (238.76,101.4) .. (238.76,100.31) -- cycle ;
\draw  [draw opacity=0][fill={rgb, 255:red, 0; green, 122; blue, 255 }  ,fill opacity=1 ] (238.46,117.51) .. controls (238.46,116.42) and (239.35,115.53) .. (240.44,115.53) .. controls (241.53,115.53) and (242.42,116.42) .. (242.42,117.51) .. controls (242.42,118.6) and (241.53,119.49) .. (240.44,119.49) .. controls (239.35,119.49) and (238.46,118.6) .. (238.46,117.51) -- cycle ;
\draw  [draw opacity=0][fill={rgb, 255:red, 126; green, 211; blue, 33 }  ,fill opacity=1 ] (238.26,134.91) .. controls (238.26,133.82) and (239.15,132.93) .. (240.24,132.93) .. controls (241.33,132.93) and (242.22,133.82) .. (242.22,134.91) .. controls (242.22,136) and (241.33,136.89) .. (240.24,136.89) .. controls (239.15,136.89) and (238.26,136) .. (238.26,134.91) -- cycle ;

\draw (209.6,206) node [anchor=north west][inner sep=0.75pt]  [font=\small] [align=left] {R0};
\draw (160.4,54.4) node [anchor=north west][inner sep=0.75pt]  [font=\small] [align=left] {R1};
\draw (15.2,207.6) node [anchor=north west][inner sep=0.75pt]  [font=\small] [align=left] {L0};
\draw (10.8,89.2) node [anchor=north west][inner sep=0.75pt]  [font=\small] [align=left] {L1};
\draw (248,93.5) node [anchor=north west][inner sep=0.75pt]  [font=\footnotesize] [align=left] {Prior belief};
\draw (248,110.5) node [anchor=north west][inner sep=0.75pt]  [font=\footnotesize] [align=left] {Agent-optimal posteriors};
\draw (248,128.5) node [anchor=north west][inner sep=0.75pt]  [font=\footnotesize] [align=left] {Principal-optimal posteriors};

\end{tikzpicture}
\vspace{-2em}
\caption{Leading case: $H=G$.}
\label{fig:5} 
\end{figure}

Figure \ref{fig:5} illustrates the analysis above. The tetrahedron
depicts the probability simplex that lies in $\mathbb{R}^{3}$. Each
vertex of the tetrahedron is a degenerate belief (labeled by the state
that occurs with probability one). The red dot is the uniform prior.
The hyperplane represents the cutoff beliefs where the student is
indifferent between actions $l$ and $r$. To the right of the hyperplane,
$q_{R}>q_{L}$ and $r$ is optimal. Vice versa for the left of the
hyperplane. The two blue dots are the student-optimal posterior beliefs.
The two green dots are the solution to the principal-agent problem.
Evidently, all the dots are on a line segment, which illustrates
that the teacher provides only test-related information to the student. It is also evident that the teacher provides more information than the student would choose for herself.

\paragraph{Leading Case: Decision-Irrelevant $G$.}

Now, we turn to the second leading case, in which the teacher only cares about how much
the student knows about the state $t_{2}\in T_{2}$ (the test-irrelevant
dimension). We assume that the teacher's payoff is
$2\mathbb{E^{\pi}}[|q_{t_{2}}-0.5|]$ if the measure of stopping belief
is $\pi$, where $q_{t_{2}}$ denotes the probability of $t_{2}=1$
under the student's stopping belief.
Note that this payoff function
is equivalent to the instrumental value of information in a hypothetical
binary decision problem where the utility of matching (mismatching)
the state $y$ is 1 (-$1$). Theorem \ref{thm:main-result} shows that the teacher's optimal policy maximizes the
following auxiliary problem (for some $\lambda>0$):

\[
\sup_{\pi}\mathbb{E^{\pi}}[2|q_{t_{2}}-0.5|+\lambda\widehat{u}(q)-2\lambda(H(q)-H(\bar{q}_{0}))],
\]

which is equivalent to a hypothetical static RI problem where the
decision maker has four possible actions, whose utilities are

\[
\begin{array}{c|cccc}
u(a,t) & L0 & R0 & L1 & R1\\
\hline a_{1} & 1+\lambda & 1-\lambda & -1+\lambda & -1-\lambda\\
a_{2} & 1-\lambda & 1+\lambda & -1-\lambda & -1+\lambda\\
a_{3} & -1+\lambda & -1-\lambda & 1+\lambda & 1-\lambda\\
a_{4} & -1-\lambda & -1+\lambda & 1-\lambda & 1+\lambda
\end{array}.
\]

Per \citet{matejka2015rational}, the solution is logit, given by

\[
\begin{array}{c|cccc}
P(a|t) & L0 & R0 & L1 & R1\\
\hline a_{1} & \frac{e^{1+\lambda}}{e^{1+\lambda}+e+e^{\lambda}+1} & \frac{e}{e^{1+\lambda}+e+e^{\lambda}+1} & \frac{e^{\lambda}}{e^{1+\lambda}+e+e^{\lambda}+1} & \frac{1}{e^{1+\lambda}+e+e^{\lambda}+1}\\
a_{2} & \frac{e}{e^{1+\lambda}+e+e^{\lambda}+1} & \frac{e^{1+\lambda}}{e^{1+\lambda}+e+e^{\lambda}+1} & \frac{1}{e^{1+\lambda}+e+e^{\lambda}+1} & \frac{e^{\lambda}}{e^{1+\lambda}+e+e^{\lambda}+1},\\
a_{3} & \frac{e^{\lambda}}{e^{1+\lambda}+e+e^{\lambda}+1} & \frac{1}{e^{1+\lambda}+e+e^{\lambda}+1} & \frac{e^{1+\lambda}}{e^{1+\lambda}+e+e^{\lambda}+1} & \frac{e}{e^{1+\lambda}+e+e^{\lambda}+1}\\
a_{4} & \frac{1}{e^{1+\lambda}+e+e^{\lambda}+1} & \frac{e}{e^{1+\lambda}+e+e^{\lambda}+1} & \frac{e}{e^{1+\lambda}+e+e^{\lambda}+1} & \frac{e^{1+\lambda}}{e^{1+\lambda}+e+e^{\lambda}+1}
\end{array}
\]

and the default rule is uniform $P(a)\equiv\frac{1}{4}.$ Note that
the conditional distribution,

\[
\begin{array}{c|cc}
P(a|t) & L0\&L1 & R0\&R1\\
\hline a_{1}\&a_{3} & \frac{e}{e+1} & \frac{1}{e+1}\\
a_{2}\&a_{4} & \frac{1}{e+1} & \frac{e}{e+1}
\end{array},
\]
is identical to the student-optimal solution (solved in Example 1). This is an illustration of the conclusion of Proposition \ref{prop:noise}.
The unknown parameter $\lambda$ can be pinned down by setting the
student's IC condition binding. The analysis above suggests that when
the teacher only cares about the test-irrelevant knowledge, she will
give the student exactly his preferred test-relevant information and
extra knowledge such that the student gets barely enough welfare to
be willing to participate.

\begin{figure}
\centering
\tikzset{every picture/.style={line width=0.75pt}} 

\begin{tikzpicture}[x=0.75pt,y=0.75pt,yscale=-1,xscale=1]

\draw    (29.56,206.16) -- (29.56,96.96) ;
\draw    (209.96,205.96) -- (159.96,70.16) ;
\draw [line width=0.75]    (29.56,96.96) -- (159.96,70.16) ;
\draw    (29.56,96.96) -- (209.96,205.96) ;
\draw    (29.56,206.16) -- (209.96,205.96) ;
\draw  [draw opacity=0][fill={rgb, 255:red, 177; green, 208; blue, 244 }  ,fill opacity=1 ] (94.76,83.56) -- (119.76,151.46) -- (119.76,206.06) -- (94.76,138.16) -- cycle ;
\draw    (29.56,206.16) -- (159.96,70.16) ;
\draw  [draw opacity=0][fill={rgb, 255:red, 208; green, 2; blue, 27 }  ,fill opacity=1 ] (107.26,144.81) .. controls (107.26,143.72) and (108.15,142.83) .. (109.24,142.83) .. controls (110.33,142.83) and (111.22,143.72) .. (111.22,144.81) .. controls (111.22,145.9) and (110.33,146.79) .. (109.24,146.79) .. controls (108.15,146.79) and (107.26,145.9) .. (107.26,144.81) -- cycle ;
\draw  [draw opacity=0][fill={rgb, 255:red, 0; green, 122; blue, 255 }  ,fill opacity=1 ] (71.66,148.21) .. controls (71.66,147.12) and (72.55,146.23) .. (73.64,146.23) .. controls (74.73,146.23) and (75.62,147.12) .. (75.62,148.21) .. controls (75.62,149.3) and (74.73,150.19) .. (73.64,150.19) .. controls (72.55,150.19) and (71.66,149.3) .. (71.66,148.21) -- cycle ;
\draw  [draw opacity=0][fill={rgb, 255:red, 0; green, 122; blue, 255 }  ,fill opacity=1 ] (140.46,142.01) .. controls (140.46,140.92) and (141.35,140.03) .. (142.44,140.03) .. controls (143.53,140.03) and (144.42,140.92) .. (144.42,142.01) .. controls (144.42,143.1) and (143.53,143.99) .. (142.44,143.99) .. controls (141.35,143.99) and (140.46,143.1) .. (140.46,142.01) -- cycle ;
\draw  [draw opacity=0][fill={rgb, 255:red, 126; green, 211; blue, 33 }  ,fill opacity=1 ] (66.53,120.61) .. controls (66.53,119.52) and (67.41,118.63) .. (68.51,118.63) .. controls (69.6,118.63) and (70.49,119.52) .. (70.49,120.61) .. controls (70.49,121.7) and (69.6,122.59) .. (68.51,122.59) .. controls (67.41,122.59) and (66.53,121.7) .. (66.53,120.61) -- cycle ;
\draw  [draw opacity=0][fill={rgb, 255:red, 126; green, 211; blue, 33 }  ,fill opacity=1 ] (149.73,174.61) .. controls (149.73,173.51) and (150.61,172.63) .. (151.71,172.63) .. controls (152.8,172.63) and (153.69,173.51) .. (153.69,174.61) .. controls (153.69,175.7) and (152.8,176.59) .. (151.71,176.59) .. controls (150.61,176.59) and (149.73,175.7) .. (149.73,174.61) -- cycle ;
\draw  [draw opacity=0][fill={rgb, 255:red, 208; green, 2; blue, 27 }  ,fill opacity=1 ] (238.76,100.31) .. controls (238.76,99.22) and (239.65,98.33) .. (240.74,98.33) .. controls (241.83,98.33) and (242.72,99.22) .. (242.72,100.31) .. controls (242.72,101.4) and (241.83,102.29) .. (240.74,102.29) .. controls (239.65,102.29) and (238.76,101.4) .. (238.76,100.31) -- cycle ;
\draw  [draw opacity=0][fill={rgb, 255:red, 0; green, 122; blue, 255 }  ,fill opacity=1 ] (238.46,117.51) .. controls (238.46,116.42) and (239.35,115.53) .. (240.44,115.53) .. controls (241.53,115.53) and (242.42,116.42) .. (242.42,117.51) .. controls (242.42,118.6) and (241.53,119.49) .. (240.44,119.49) .. controls (239.35,119.49) and (238.46,118.6) .. (238.46,117.51) -- cycle ;
\draw  [draw opacity=0][fill={rgb, 255:red, 126; green, 211; blue, 33 }  ,fill opacity=1 ] (238.26,134.91) .. controls (238.26,133.82) and (239.15,132.93) .. (240.24,132.93) .. controls (241.33,132.93) and (242.22,133.82) .. (242.22,134.91) .. controls (242.22,136) and (241.33,136.89) .. (240.24,136.89) .. controls (239.15,136.89) and (238.26,136) .. (238.26,134.91) -- cycle ;
\draw  [draw opacity=0][fill={rgb, 255:red, 126; green, 211; blue, 33 }  ,fill opacity=1 ] (75.86,178.61) .. controls (75.86,177.52) and (76.75,176.63) .. (77.84,176.63) .. controls (78.93,176.63) and (79.82,177.52) .. (79.82,178.61) .. controls (79.82,179.7) and (78.93,180.59) .. (77.84,180.59) .. controls (76.75,180.59) and (75.86,179.7) .. (75.86,178.61) -- cycle ;
\draw  [draw opacity=0][fill={rgb, 255:red, 126; green, 211; blue, 33 }  ,fill opacity=1 ] (132.73,114.61) .. controls (132.73,113.51) and (133.61,112.63) .. (134.71,112.63) .. controls (135.8,112.63) and (136.69,113.51) .. (136.69,114.61) .. controls (136.69,115.7) and (135.8,116.59) .. (134.71,116.59) .. controls (133.61,116.59) and (132.73,115.7) .. (132.73,114.61) -- cycle ;

\draw (209.6,206) node [anchor=north west][inner sep=0.75pt]  [font=\small] [align=left] {R0};
\draw (160.4,54.4) node [anchor=north west][inner sep=0.75pt]  [font=\small] [align=left] {R1};
\draw (15.2,207.6) node [anchor=north west][inner sep=0.75pt]  [font=\small] [align=left] {L0};
\draw (10.8,89.2) node [anchor=north west][inner sep=0.75pt]  [font=\small] [align=left] {L1};
\draw (248,93.5) node [anchor=north west][inner sep=0.75pt]  [font=\footnotesize] [align=left] {Prior belief};
\draw (248,110.5) node [anchor=north west][inner sep=0.75pt]  [font=\footnotesize] [align=left] {Agent-optimal posteriors};
\draw (248,128.5) node [anchor=north west][inner sep=0.75pt]  [font=\footnotesize] [align=left] {Principal-optimal posteriors};

\end{tikzpicture}
\vspace{-2em}
\caption{Leading Case: Decision-Irrelevant $G$.}
\label{fig:6} 
\end{figure}

Figure \ref{fig:6} illustrates the analysis above. Except the green
dots, the figure is exactly the same as Figure \ref{fig:5}. The green
dots are the optimal posteriors of the principal-agent problem. For
any pair of green dots on the same side of the hyperplane, they equal
the blue dot in expectation. Their distance to the hyperplane is also
the same as the blue dot, illustrating that the teacher provides the
student his preferred test-relevant information together with extra
test-irrelevant knowledge.

\section{Characterization of dynamics}\label{sec:Dynamics}

In \Cref{sec:Optimal-Policy}, we showed that to solve for the optimal policy, it is sufficient to study a simple static rational inattention problem \eqref{eq:p:relaxed} and implement the dynamic policy by diluting the static solution. In this section, we seek to find necessary conditions that characterize the dynamics of the optimal policy. For the purpose of characterizing optimal polices, for this section, we focus on the non-degenerate case where the optimal policy is to acquire some information ($\bar{J}(\bar{q}_{0})>0$ in the context of Proposition \ref{prop:existence}). We show that two desiderata---interim incentive compatibility and credibility---jointly provide a tight characterization of the belief dynamics. 

\subsection{Implications of interim incentive compatibility}

To begin, we define, given a measure $\pi\in\mathcal{P}(\mathcal{P}(X))$,
the maximal value of information acquisition in a hypothetical restricted
static rational inattention problem: 
\[
\overline{V}_R(q,\pi)=\max_{\pi'\in\mathcal{P}(\text{Supp}(\pi)):E^{\pi'}[q']=q}\mathbb{E}^{\pi'}\left[\widehat{u}(q')-\widehat{u}(q)-\frac{ \kappa }{\chi}(H(q')-H(q))\right].
\]
This problem maximizes the agent's expected utility subject to the
usual constraint of Bayes consistency and the additional constraint
that the posteriors lie in the support of $\pi$. We use the notation
$\overline{V}_R(q,\pi)$ to emphasize that the problem is restricted relative
to the usual rational inattention problem. Note that the problem is
feasible only for $q$ that lie in in the convex hull of the support
of $\pi$, which we denote by $\text{Conv}(\text{Supp}(\pi))$.

We interpret $\overline{V}_R(q,\pi)$ as describing the maximum possible value of information
acquisition, in the sense that $\overline{V}_R(q,\pi)$ is the difference between
the best possible utility the agent could achieve with and without information acquisition, under the restriction that the set of stopping beliefs lie in the support of $\pi$.\footnote{The value of information acquisition is related to, but not the same
as, the notion of uncertainty defined in \citet{frankel2019quantifying}.} We define the set of ``suspensive'' beliefs given $\pi$ as the
set for which the value of information acquisition is weakly positive. 
\begin{defn}
\label{defn:comp} Given $\pi\in\mathcal{P}(\mathcal{P}(X))$,
the belief $q\in\mathcal{P}(X)$ is \emph{suspensive} if $q\text{\ensuremath{\in}Conv}(\text{Supp}(\pi))\setminus\mathrm{Supp(\pi)}$
and satisfies $\overline{V}_R(q,\pi)\geq0$; the
belief $q\in\mathcal{P}(X)$ is \emph{decisive }if $q\in\text{Supp}(\pi)$.
\end{defn}
Given a measure $\pi$ that describes the stopping beliefs, it is
intuitive that beliefs in $\mathrm{Supp(\pi)}$ are ``decisive''
because they directly lead to stopping. At the suspensive beliefs, the agent could benefit, relative to the prior consistent
with $\pi$, from the signals that the principal is offering (which
always result in some posterior in the support of $\pi$). Therefore,
suspensive beliefs are those that the principal can
possibly induce ``suspense'' before eventually leading the agent
to stopping beliefs in $\mathrm{Supp(\pi)}$.

Suppose the optimal policy in \eqref{eq:p:relaxed}, $\pi^{*}$, is
unique. In this case, the stochastic process for beliefs $q_{t}$
cannot leave the set of suspensive beliefs before
$t=\tau$. If $q_{t}$ left the convex hull of the support of $\pi^{*}$,
then $\pi^{*}$ would not describe the law of $q_{\tau}$. If $q_{t}$
was decisive or if $\overline{V}_R(q_{t},\pi^{*})<0$, the agent would choose
to stop. In the case of multiple optimal policies, this argument applies
for some optimal $\pi^{*}$, which leads to the proposition below. 
\begin{prop}
\label{prop:comp}Given $\bar{q}_{0}\in\mathcal{P}(X)$, let $\Pi^{*}$
be the set of solutions to \eqref{eq:p:relaxed}. Suppose $(P, \tau)$ solves \eqref{eq:P} and $\sup_{\pi\in\Pi^*}\overline{V}_R(q,\pi)$ is a measurable function of $q$. Then, $\forall t\in[0,\tau)$, $q_t$ is almost surely suspensive:
\begin{align*}
\mathrm{Prob}^P\left(\sup_{\pi\in\Pi^*}\overline{V}_R(q_t,\pi)\ge 0\Big|t<\tau\right)= 1.
\end{align*}
\end{prop}
\begin{proof}
    See Appendix \ref{ssec:proof:comp}.
\end{proof}

Solving \eqref{eq:p:relaxed} generically leads to the IC constraint
binding, which suggests that the value of acquiring $\pi^{*}$ is
exactly zero at the prior for the agent. The set of suspensive beliefs thus defines beliefs that are ``more uncertain''
than the prior. Therefore, the interpretation of Proposition \ref{prop:comp}
is that two types of signals appear in any optimal policy almost surely.
The first type are \emph{decisive} signals that are so informative
that the agent makes a decision immediately after receiving the signal.
The second type are \emph{suspensive} signals that causes the agent's
beliefs to move to a posterior that is ``more uncertain'' than the
prior.

\paragraph*{Implication: The Necessity of Jumps in Beliefs.}

A direct implication of Proposition \ref{prop:comp} is that the optimal policy cannot involve
continuous sample paths for beliefs (we assume the initial prior lies
in the continuation region). If sample paths were continuous, the
belief process $q_{t}$ would have to exit the set of suspensive beliefs, which
would immediately induce the agent to stop. Consequently, beliefs
must jump discontinuously from the set of suspensive beliefs (which
lie in the continuation region of the benchmark model) to the set
of decisive beliefs, which lie in the strict stopping region of the
benchmark model (by Proposition \ref{prop:extreme-beliefs}).

To complement our discussion of the necessity of belief jumps, we analyze an extension of our model in which the belief process is required to be continuous. Formally, let $D^C\subset \mathbb{R}_+^{\mathcal{P}(X)}$ denote the set of \emph{continuous} $\mathcal{P}(X)$-valued functions, i.e., all continuous paths. Let $\mathcal{A}^{C}(\bar{q}_{0})$ denote the subset of $\mathcal{A}(\bar{q}_{0})$ whose probability measures $P$ are supported within $D^{C}$. Consider the optimization
problem: 
\begin{align}
J^{C}(\bar{q}_{0}) & =\sup_{(P,\tau)\in\mathcal{A}^{C}(\bar{q}_{0})}\mathbb{E}^{P}[G(q_{\tau})-G(\bar{q}_{0})|\mathcal{F}_{0}]\label{eqn:principle:bound}\\
\textrm{s.t.} & \tau\in\arg\max_{\tau'\in\mathcal{T}}\mathbb{E}^{P}[\widehat{u}(q_{\tau'})- \kappa \tau'|\mathcal{F}_{0}].\nonumber 
\end{align}
Equation \eqref{eqn:principle:bound} is the same as Definition \ref{def:The-principal's-problem}
except that the set of admissible strategies is restricted to have
no jumps.

We begin our analysis of this restricted problem by observing that
the agent-optimal benchmark is unchanged. This follows
from results in \citet{hebert2019rational}, who show that the agent-optimal
policy can be implemented by a pure diffusion process. Let $E^{A}=\left\{ q\in\mathcal{P}(X)|V^{B}(q)>\widehat{u}(q)\right\} $
be the continuation region of the agent-optimal benchmark and $\bar{E}^A$ be the closure of $E^A$.

As we argued previously, the beliefs $q_{t}$ must lie in the continuation
region of the agent-optimal benchmark if $t<\tau$. Given that the $\mathcal{A}^C(\bar{q}_0)$ admits only continuous processes, it follows that the stopping beliefs
must lie in $\bar{E}^A$. 
\begin{lem}
\label{lem:support} If $(P,\tau)$ is admissible in Equation
\eqref{eqn:principle:bound}, then $\mathrm{Supp}(q_{\tau})\subset \bar{E}^A$. 
\end{lem}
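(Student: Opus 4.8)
The plan is to prove Lemma~\ref{lem:support} by combining two facts: first, that beliefs must remain in the continuation region $E^{A}$ of the agent-optimal benchmark strictly before stopping, and second, that the jump-size bound $d$ restricts how far $q_{\tau}$ can lie from the belief held just before the stopping time. The key observation, already emphasized in the text preceding the lemma, is that under \emph{any} admissible policy in \eqref{eqn:principle:bound} the agent solves the same stopping problem as in the baseline model, and the agent-optimal value $V^{B}$ is an upper bound on what the agent can achieve. Hence if $q_{t}\notin E^{A}$ for some $t<\tau$, the agent would strictly prefer to stop at $t$ rather than continue, contradicting the optimality of $\tau$.

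\medskip

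First I would make precise the claim that $q_{t^{-}}\in E^{A}$ for all $t \le \tau$. Fix an admissible $(P,I,\tau)$. For any time $t$, the agent's continuation value starting from belief $q_{t}$ cannot exceed $V^{B}(q_{t})$, since $V^{B}$ is the value of the benchmark problem in which the agent freely chooses \emph{both} the information process and the stopping time (subject only to \eqref{eq:general-constraint}), a strictly larger feasible set than the one the agent faces when the principal has fixed $P$. If at some $t<\tau$ we had $q_{t^{-}}\notin E^{A}$, i.e.\ $V^{B}(q_{t^{-}})=\widehat{u}(q_{t^{-}})$, then the agent's continuation value at $t$ would be at most $\widehat{u}(q_{t^{-}})$, which is exactly the payoff from stopping immediately; combined with the strictly positive flow cost $\bar{\kappa}$ of any further delay, continuing past $t$ cannot be optimal. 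Therefore optimality of $\tau$ forces $q_{t^{-}}\in E^{A}$ for all $t<\tau$, and in particular $q_{\tau^{-}}\in \overline{E^{A}}$ (the closure), with $q_{\tau^{-}}\in E^{A}$ whenever the stopping is driven by a jump rather than by a boundary hit.

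\medskip

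Second I would invoke the jump-size restriction. Since $(P,I,\tau)\in\mathcal{A}^{d}(\bar{q}_{0})$, the sample paths lie in $D^{d}$, so $|q_{\tau}-q_{\tau^{-}}|\le d$. Using $q_{\tau^{-}}\in E^{A}$ (or its closure) from the previous step, we get
\[
\inf_{q'\in E^{A}}|q_{\tau}-q'|\le |q_{\tau}-q_{\tau^{-}}|\le d,
\]
which is exactly the defining condition of $\bar{Q}^{d}$, yielding $q_{\tau}\in\bar{Q}^{d}$ almost surely, hence $\mathrm{Supp}(q_{\tau})\subset\bar{Q}^{d}$.

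\medskip

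The main obstacle I anticipate is the boundary case $q_{\tau^{-}}\in\partial E^{A}\setminus E^{A}$, where the pre-jump belief sits on the closure but not the interior of the continuation region. Because $\bar{Q}^{d}$ is defined via an infimum over the open set $E^{A}$, a point on $\partial E^{A}$ still satisfies $\inf_{q'\in E^{A}}|q-q'|=0\le d$, so the conclusion survives; I would make this explicit to avoid a gap. A second subtlety is handling the event that $\tau$ is reached without a jump (continuous approach to the stopping boundary), in which case $q_{\tau}=q_{\tau^{-}}\in\overline{E^{A}}\subset\bar{Q}^{d}$ trivially. Care is also needed in the right-continuity convention for the c\`{a}dl\`{a}g process so that ``stopping beliefs'' $q_{\tau}$ and ``pre-stopping beliefs'' $q_{\tau^{-}}$ are unambiguously defined; once these conventions are fixed, the argument is a direct consequence of the upper bound $V^{B}$ and the path restriction in $D^{d}$.
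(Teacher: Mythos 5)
Your proof is correct and takes essentially the same route as the paper's: the paper defines the auxiliary time $\tau'=\tau\wedge\inf\{t: q_{t}\notin E^{A}\}$ and shows $\tau'=\tau$ by contradiction (continuing past an exit from $E^{A}$ is strictly worse than stopping, since the continuation value is capped by $V^{B}=\widehat{u}$ there, violating the agent's optimality), which is exactly your first step, and then invokes the $D^{d}$ path restriction to place $q_{\tau}$ within distance $d$ of $E^{A}$, which is your second step. Your explicit treatment of the boundary case $q_{\tau^{-}}\in\partial E^{A}$ is a welcome bit of extra care that the paper glosses over.
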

\begin{proof}
See the appendix, section \ref{subsec:Proof-of-Lemma-support}. 
\end{proof}
If in addition there are only two states $(|X|=2$), the locally invariant
posteriors property implies that the stopping beliefs will be identical
to those the agent would choose in the agent-optimal problem; the
only alternative the principal could choose would involve less information
acquisition by the agent.\footnote{When there are more than two states ($|X|>2$), the principal's and
agent's optimal policies need not coincide. Consider as an example
the case of two actions ($|A|=2$). The agent-optimal policy will
in this case involve a diffusion on a line segment within the probability
simplex (see \citet{hebert2019rational}). The principal cannot induce
the agent to follow this line segment beyond its endpoints, but can
send the agent signals that cause the agent's beliefs to move orthogonal
to this line segment.}
\begin{prop}
\label{prop:bound:binary} If $|X|=2$, then $J^{0}(\bar{q}_{0})$
is achieved by an agent-optimal policy. 
\end{prop}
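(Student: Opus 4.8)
The plan is to reduce the principal's restricted problem to maximizing the expected generalized entropy $\mathbb{E}^{\pi}[H(q)]$ of the stopping belief, and then to exploit the convexity of $H$ together with the one-dimensionality of the binary simplex. Throughout I would assume $\bar{q}_{0}\in E^{A}=(\underline{q},\bar{q})$; if $\bar{q}_{0}\notin\overline{E^{A}}$ the agent stops immediately under every policy, $J^{0}(\bar{q}_{0})=0$, and the trivial agent-optimal policy attains it, so the claim holds in that case.

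First I would rewrite the objective. Fix any $(P,I,\tau)\in\mathcal{A}^{0}(\bar{q}_{0})$ and let $\pi$ be the law of $q_{\tau}$. Because $d=0$ forces continuous sample paths, $H(q_{t})$ is a continuous submartingale whose predictable finite-variation part is exactly $I_{t}$, so Itô's (Dynkin's) formula applied to the stopped process yields $\mathbb{E}^{P}[I_{\tau}]=\mathbb{E}^{P}[H(q_{\tau})]-H(\bar{q}_{0})$. The participation bound of Lemma~\ref{lem:neccesary} gives $\bar{\kappa}\,\mathbb{E}^{P}[\tau]\le\mathbb{E}^{\pi}[\hat{u}(q)]-\hat{u}(\bar{q}_{0})<\infty$, so $\tau<\infty$ almost surely, and boundedness of $H$ on the compact simplex justifies the optional-stopping step via monotone convergence along $\tau\wedge n$. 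Hence engagement equals $\rho\big(\mathbb{E}^{\pi}[H(q)]-H(\bar{q}_{0})\big)$, and maximizing it is the same as maximizing $\mathbb{E}^{\pi}[H(q)]$.

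Next I would establish the upper bound. By Lemma~\ref{lem:support}, $\mathrm{Supp}(q_{\tau})\subseteq\bar{Q}^{0}=\overline{E^{A}}=[\underline{q},\bar{q}]$, while the martingale property gives $\mathbb{E}^{\pi}[q]=\bar{q}_{0}$. Since $|X|=2$, this is a mean constraint for a distribution on the interval $[\underline{q},\bar{q}]$, and convexity of $H$ gives $H(q)\le\ell(q)$ there, with $\ell$ the affine chord through $(\underline{q},H(\underline{q}))$ and $(\bar{q},H(\bar{q}))$. Taking expectations,
\[
\mathbb{E}^{\pi}[H(q)]\le\ell\big(\mathbb{E}^{\pi}[q]\big)=\ell(\bar{q}_{0}),
\]
with equality iff $\pi$ is supported on $\{\underline{q},\bar{q}\}$ (by strong convexity of $H$). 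Thus $J^{0}(\bar{q}_{0})\le\rho\big(\ell(\bar{q}_{0})-H(\bar{q}_{0})\big)$. To attain this bound, I would invoke \citet{hebert2019rational}: the agent-optimal benchmark is implementable by a pure diffusion, which is admissible in $\mathcal{A}^{0}(\bar{q}_{0})$ since it has no jumps. This diffusion evolves inside $E^{A}=(\underline{q},\bar{q})$ and is absorbed at first exit, so its stopping law $\pi^{a}$ is the two-point distribution on $\{\underline{q},\bar{q}\}$ with mean $\bar{q}_{0}$, giving $\mathbb{E}^{\pi^{a}}[H(q)]=\ell(\bar{q}_{0})$. Because this is the agent's own optimum, the incentive constraint $\tau\in\arg\max$ in \eqref{eqn:principle:bound} holds automatically, and its engagement equals the upper bound, so $J^{0}(\bar{q}_{0})$ is achieved by an agent-optimal policy.

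I expect the only delicate step to be the identity $\mathbb{E}^{P}[I_{\tau}]=\mathbb{E}^{P}[H(q_{\tau})]-H(\bar{q}_{0})$ for an \emph{arbitrary} admissible continuous policy, since it requires controlling the local-martingale part under a general stopping time; compactness of $\mathcal{P}(X)$ (hence boundedness of $H$) and finiteness of $\mathbb{E}^{P}[\tau]$ from the participation constraint are precisely what make the localization argument go through. The remaining ingredients—$\bar{Q}^{0}=\overline{E^{A}}$ and that the agent-optimal stopping beliefs are exactly the endpoints $\{\underline{q},\bar{q}\}$—are immediate in the binary case, because the continuation region is an interval whose boundary is where the absorbing diffusion stops.
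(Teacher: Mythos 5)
Your overall strategy coincides with the paper's: (i) identify engagement with the expected increase in $H$ at the stopping belief, (ii) use Lemma \ref{lem:support} to confine stopping beliefs to the closure of the agent's continuation region, (iii) bound $\mathbb{E}^{\pi}[H(q)]$ by the chord of $H$ (equivalently, note that by strict convexity the relaxed problem is solved by the two-point law on the endpoints), and (iv) attain the bound with the agent-optimal diffusion of \citet{hebert2019rational}, which is incentive compatible because the stopping rule is optimal for the agent given that process. Steps (i), (iii) and (iv) are sound, and your localization argument for $\mathbb{E}^{P}[I_{\tau}]=\mathbb{E}^{P}[H(q_{\tau})]-H(\bar{q}_{0})$ is, if anything, more careful than what the paper writes down.

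The genuine gap is in step (ii), where you set $E^{A}=(\underline{q},\bar{q})$ and later assert that the claim is "immediate in the binary case, because the continuation region is an interval." That is true when $|A|=2$, but the proposition only assumes $|X|=2$: with three or more actions, $\hat{u}-\frac{\bar{\kappa}}{\chi}H$ is piecewise strictly concave with several pieces, and the set where it lies strictly below its concave envelope can be a union of several disjoint open intervals (for instance, when an intermediate ``safe'' action touches the envelope on a subinterval around the region where it is optimal). In that case Lemma \ref{lem:support} only yields $\mathrm{Supp}(q_{\tau})\subseteq\overline{E^{A}}$, which is strictly larger than the component $[\underline{q},\bar{q}]$ containing $\bar{q}_{0}$, and your chord bound is no longer valid: a Bayes-plausible $\pi$ supported on the outermost endpoints of $\overline{E^{A}}$ has $\mathbb{E}^{\pi}[H(q)]>\ell(\bar{q}_{0})$ by strict convexity of $H$, so the upper bound you derive does not follow from Lemma \ref{lem:support} alone, and it would exceed what the agent-optimal policy (absorbed at $\{\underline{q},\bar{q}\}$) delivers, breaking the matching of bound and attainment. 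The missing step, which the paper supplies explicitly, is a path-connectedness argument: because $d=0$ forces continuous sample paths, any path starting at $\bar{q}_{0}$ and ending outside $[\underline{q},\bar{q}]$ must first traverse the gap separating this component from the rest of $E^{A}$, where the agent strictly prefers to stop, so such paths are inadmissible and $\mathrm{Supp}(q_{\tau})\subseteq[\underline{q},\bar{q}]$ after all. With that one addition (and replacing your global interval by the component of $E^{A}$ containing the prior), your argument is complete and coincides with the paper's proof.
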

\begin{proof}
See the appendix, section \ref{subsec:Proof-of-Proposition-bound-binary}. 
\end{proof}

This proposition illustrates that the possibility of jumps in beliefs is required to guarantee the agent's participation constraint will bind. Without jumps in the belief process, the agent's ability to stop at any time allows the agent to extract rents that would otherwise accrue to the principal.

\subsection{Implications of credibility} \label{ssec:credibility}

So far, our analysis assumes full intertemporal commitment on the part of the principal. However, in practice, commitment power might be limited, leading to a credibility concern. In this subsection, we tackle the issue of limited commitment by characterizing the necessary and sufficient conditions for a policy to be credible, i.e., immune to deviations in a sub-game perfection sense. 

The definition of sub-game perfection for a continuous time game like ours is technically delicate. To avoid these complexities, we introduce the notion of a "strongly sub-game perfect" equilibrium. An equilibrium is strongly sub-game perfect if the principal is not willing to deviate, even if he can deviate to the full-commitment solution, and likewise the agent is willing to stop even if she can instead switch to the agent-optimal benchmark at that point.

In an environment in which commitment is valuable, such an equilibrium cannot exist. We will show in our environment that some (but not all) full-commitment solutions to the principal-agent problem are strongly sub-game perfect in this sense, which is to say that commitment has no value. Nevertheless, despite this lack of value, the principal's inability to commit will prevent some full-commitment solutions from being implemented as equilibria of the game without commitment.

\begin{defn}\label{defn:seqn}
   $\forall q_0\in\mathcal{P}(X)$, $(P, \tau)\in \mathcal{A}(\bar{q}_0)$ constitutes a \emph{strongly sub-game perfect equilibrium} of the game if the agent's interim incentive compatibility constraints are satisfied and, in addition,  
    \begin{enumerate}
        \item[(i)] the principal would not deviate to the commitment solution: $\forall t$, $\mathrm{Prob}^P\Big(\mathbb{E}^P[G(q_{\tau})-G(q_t)-J(q_t)|\mathcal{F}_t]\ge 0\Big)=1$, and 
        \item[(ii)] the agent, when stopping, would not prefer to continue under the agent-optimal benchmark: $\mathrm{Prob}^P\big(V^B(q_{\tau}) \le \widehat{u}(q_{\tau})\big)=1 $.
    \end{enumerate}
    Let $\mathcal{V}:\mathcal{P}(X)\rightrightarrows \mathbb{R}_+^2$ denote the correspondence from prior to the set of strongly sub-game perfect equilibrium payoffs of the principal and agent.
\end{defn}

This notion of strongly sub-game perfect equilibrium is ``stronger'' than sub-game perfection in the sense that we overestimate deviation payoffs in conditions (i) and (ii): upon a deviation at belief, we award the principal with the full commitment payoff and the agent with her agent-optimal payoff, respectively. If such an equilibrium exists, it is clearly sub-game perfect in the usual sense as well.

In the analysis that follows, we will call a full-commitment optimal policy $(P, \tau)$ \textbf{credible} if  $(P, \tau)$ is also a strongly sub-game perfect equilibrium in the game without commitment. The following theorem shows that the optimal dilution policy we construct in \cref{sec:Optimal-Policy} is indeed credible and hence constitutes a (strongly) sub-game perfect equilibrium. To avoid technicalities, what we will not discuss is whether there are other sub-game perfect equilibria that involve lower payoffs for the principal. Let $\mathcal{P}(X)^+:=\{q\in\mathcal{P}(X)|G(q)<\sum q_{x}G(e_x)\}$, namely, the set of beliefs that information has some value for the principal.




\begin{thm}\label{thm:limited:commit}
    $\forall \bar{q}_0\in\mathcal{P}(X)$, $J(\bar{q}_0)$ can be achieved via a credible strategy, i.e., $(J(\bar{q}_0),0)\in\mathcal{V}(\bar{q}_0)$. Suppose $(P, \tau)$ solves \eqref{eq:P} and $\mathrm{supp}(q_{\tau}) \subset \mathcal{P}(X)^+$. $(P, \tau)$ is credible if and only if the agent's interim surplus is almost surely $0$:
    \begin{align}
        \forall t ,\quad \mathrm{Prob}^P\left( \mathbb{E}^P[\widehat{u}(q_{\tau})-\widehat{u}(q)- \kappa (\tau-t)|\mathcal{F}_t]=0\right)=1.\label{eq:no:surplus}
    \end{align}
\end{thm}
\begin{proof}
    See Appendix \ref{ssec:proof:limited:commit}.
\end{proof}

Note since $(J(\bar{q}_0),0)\in\mathcal{V}(\bar{q}_0)$, $J(\bar{q}_0)$ is itself a strongly sub-game perfect equilibrium payoff. Condition (i) in \cref{defn:seqn} must be satisfied for any weaker equilibrium notion that ``selects SPNE in all sub-games in favor of the principal.'' 
Therefore, \eqref{eq:no:surplus} is a sufficient and necessary condition that guards the principal from deviations followed by a favorable selection of continuing equilibrium.

Using this condition, we can characterize a set of beliefs that cannot be reached in any credible solution to the principal-agent problem. To do this, we will define the set of ``strictly suspensive'' beliefs. In parallel to our definition of suspensive beliefs, we start by defining, given a measure $\pi\in\mathcal{P}(\mathcal{P}(X))$,
the minimal value of information acquisition in a hypothetical restricted static rational inattention problem: 
\[
\underline{V}_R(q,\pi)=\min_{\pi'\in\mathcal{P}(\text{Supp}(\pi)):E^{\pi'}[q']=q}\mathbb{E}^{\pi'}\left[\widehat{u}(q')-\widehat{u}(q)-\frac{ \kappa }{\chi}(H(q')-H(q))\right].
\]
We will say that a belief is strictly suspensive if this value is strictly positive.

\begin{defn}
\label{defn:certain} Given $\pi\in\mathcal{P}(\mathcal{P}(X))$,
the belief $q\in\mathcal{P}(X)$ is \emph{strictly suspensive} if $q \ \text{ \ensuremath{\in} Conv}(\text{Supp}(\pi))\setminus\mathrm{Supp(\pi)}$
and satisfies $\underline{V}_R(q,\pi) > 0$.
\end{defn}

Note the contrast between suspensive beliefs and strictly suspensive beliefs. A belief is suspensive given $\pi$ if the \emph{maximum} value of information is \emph{weakly} positive, but is strictly suspensive only if the \emph{minimum} value of information is \emph{strictly} positive. 

When $\pi'=\pi$ is the only measure on the support of $\pi$ that satisfies Bayes-consistency, there is no distinction between $\overline{V}_R$ and $\underline{V}_R$.\footnote{This will be true generically when the number of actions is weakly lower than the number of states.} In this case, a belief is suspensive (resp. strictly suspensive) when this function is weakly (resp. strictly) positive. 

The following proposition demonstrates that a solution to the principal-agent problem is credible only if interim beliefs are not strictly suspensive.

\begin{prop}\label{prop:spne}
    Given $\bar{q}_{0}\in\mathcal{P}(X)$, let $\Pi^{*}$
be the set of solutions to \eqref{eq:p:relaxed}. Suppose $(P, \tau)$ solves \eqref{eq:P}, $\mathrm{supp}(q_{\tau}) \subset \mathcal{P}(X)^+$, and $\inf_{\pi\in \Pi^*}\underline{V}_R(q,\pi)$ is a measurable function of $q$. $(P, \tau)$ is credible only if $\forall t\in[0,\tau)$, $q_t$ is almost never strictly suspensive:
\begin{align*}
\mathrm{Prob}^P\left(\inf_{\pi\in \Pi^*}\underline{V}_R(q,\pi)\le0\Big| t<\tau\right)=1.
\end{align*}
\end{prop}
\begin{proof}
    See Appendix \ref{ssec:proof:spne}.
\end{proof}

Combining Propositions \ref{prop:comp} and \ref{prop:spne}, in any solution that satisfies interim incentive compatibility and credibility, interim beliefs are suspensive but not strictly suspensive. They must be suspensive, as otherwise the agent would choose to stop, but cannot be strictly suspensive, as otherwise the principal would be tempted to deviate and extract more of the surplus.

\paragraph*{Implication: Credible Beliefs Cannot Diffuse.}

When beliefs are suspensive but not strictly suspensive, the agent's value function is equal to
$\widehat{u}(q_t)$ at all times. This implies that a pure diffusion process is
infeasible almost everywhere in the continuation region. Because the
agent recognizes that the principal will leave her with no surplus,
only information that causes her to change her beliefs about the currently
optimal action is valuable. If the belief $q_{t}$ is such that one
action is strictly optimal given those beliefs (which will be true
generically), the principal must offer at least the possibility of
jumping to a different region in which another action is optimal;
otherwise, the agent will perceive no benefit from the information
provided. In the internet platform context, we interpret this result as suggesting that the principal
provides the agent with news articles containing extreme or sensational
claims, which should cause the agent to either move her beliefs a
lot or not at all, as opposed to providing more nuanced or qualified
information. In the teacher-student context, to engage a test-motivated student, the teacher must provide signals that have some possibility of changing the student's belief about the correct answer on the test.

\subsection{Illustrative Examples}

\paragraph{Example 1 (continued).}

We illustrate Propositions \ref{prop:comp} and \ref{prop:spne} by continuing  example 1.
First, focus on the $\bar{q}_{0}=\frac{1}{2}$ case. As is illustrated
in Figure \ref{fig:1}, the unique optimal $\pi^{*}$ involves two
posterior beliefs $\left\{ q^{1},q^{2}\right\} $, where $q^{1}<\underline{q}$
and $q^{2}>\bar{q}$. Per Proposition \ref{prop:implementation},
the dilution of $\pi^{*}$ implements an optimal policy. In this case,
the dilution of $\pi^{*}$ is the unique optimal policy. It is apparent
from the symmetry of the problem that the value of information acquisition,
$\overline{V}_R(q,\pi^{*})$, is maximized at $q=\frac{1}{2}$. As a result,
the set of suspensive beliefs is a singleton,
$\{\bar{q}_{0}\}$. By Proposition \ref{prop:comp}, beliefs will
remain at $\bar{q}_{0}$ until they jump to either $q^{1}$ or $q^{2}$.

When $\bar{q}_{0}<\frac{1}{2}$, the optimal policy is not unique. In Figure \ref{fig:dilution}, we plot the simplest dilution policy, where belief stays at $q_0$ until it jumps to the two optimal posteriors $q^1$ and $q^2.$ However, there can be other optimal policies. In this case, the set of suspensive beliefs is $[q_{0},1-q_{0}]$. Then, a full-commitment policy is optimal as long as beliefs remain in this interval $[q_{0},1-q_{0}]$ until eventually jumping to either
$q^{1}$ or $q^{2}$. Specifically, Figure \ref{fig:diffusion} plots one of such policies: when belief is at $q_0$ or $1-q_0$, it either jumps to the closer boundary, or drifts into the interior of $(q_0,1-q_0)$. When belief is interior, it follows a diffusion.

    \begin{figure}[htbp]
    \hspace{-0.3cm}
    \begin{minipage}{0.34\linewidth}
        \tikzset{every picture/.style={line width=0.75pt}} 

\begin{tikzpicture}[x=0.75pt,y=0.75pt,yscale=-1,xscale=1]

\draw    (10,150) -- (200,150) ;
\draw    (10,150) -- (10,145) ;
\draw    (200,150) -- (200,145) ;
\draw    (80,150) -- (80,145) ;
\draw    (130,150) -- (130,145) ;
\draw    (40,150) -- (40,145) ;
\draw    (171,150) -- (171,145) ;
\draw   (79.6,170.6) .. controls (79.67,175.27) and (82.04,177.56) .. (86.71,177.49) -- (94.91,177.36) .. controls (101.58,177.25) and (104.95,179.53) .. (105.02,184.2) .. controls (104.95,179.53) and (108.24,177.15) .. (114.91,177.04)(111.91,177.09) -- (123.11,176.91) .. controls (127.78,176.84) and (130.07,174.47) .. (130,169.8) ;
\draw    (80,140) .. controls (72.68,119.32) and (51.96,131.63) .. (42.37,138.31) ;
\draw [shift={(40,140)}, rotate = 324.06] [fill={rgb, 255:red, 0; green, 0; blue, 0 }  ][line width=0.08]  [draw opacity=0] (6.25,-3) -- (0,0) -- (6.25,3) -- cycle    ;
\draw    (80,140) .. controls (94.28,119.35) and (144.7,127.57) .. (167.62,138.77) ;
\draw [shift={(170,140)}, rotate = 208.65] [fill={rgb, 255:red, 0; green, 0; blue, 0 }  ][line width=0.08]  [draw opacity=0] (6.25,-3) -- (0,0) -- (6.25,3) -- cycle    ;

\draw (7,156.4) node [anchor=north west][inner sep=0.75pt]  [font=\footnotesize]  {$0$};
\draw (197,156.4) node [anchor=north west][inner sep=0.75pt]  [font=\footnotesize]  {$1$};
\draw (71,154.4) node [anchor=north west][inner sep=0.75pt]  [font=\footnotesize]  {$q_{0}$};
\draw (111,154.4) node [anchor=north west][inner sep=0.75pt]  [font=\footnotesize]  {$1-q_{0}$};
\draw (31,154.4) node [anchor=north west][inner sep=0.75pt]  [font=\footnotesize]  {$q_{1}$};
\draw (165,154.4) node [anchor=north west][inner sep=0.75pt]  [font=\footnotesize]  {$q_{2}$};
\draw (74,187) node [anchor=north west][inner sep=0.75pt]  [font=\tiny] [align=left] {(Strictly )Suspensive};
\draw (24,187) node [anchor=north west][inner sep=0.75pt]  [font=\tiny] [align=left] {Decisive};
\draw (158,186.8) node [anchor=north west][inner sep=0.75pt]  [font=\tiny] [align=left] {Decisive};

\end{tikzpicture}
        \vspace{-2em}
        \caption{Dilution policy}
        \label{fig:dilution}
        \end{minipage}
    \begin{minipage}{0.35\linewidth}
        \tikzset{every picture/.style={line width=0.75pt}} 

\begin{tikzpicture}[x=0.75pt,y=0.75pt,yscale=-1,xscale=1]

\draw    (10,150) -- (200,150) ;
\draw    (10,150) -- (10,145) ;
\draw    (200,150) -- (200,145) ;
\draw    (80,150) -- (80,145) ;
\draw    (130,150) -- (130,145) ;
\draw    (40,150) -- (40,145) ;
\draw    (171,150) -- (171,145) ;
\draw   (79.6,170.6) .. controls (79.67,175.27) and (82.04,177.56) .. (86.71,177.49) -- (94.91,177.36) .. controls (101.58,177.25) and (104.95,179.53) .. (105.02,184.2) .. controls (104.95,179.53) and (108.24,177.15) .. (114.91,177.04)(111.91,177.09) -- (123.11,176.91) .. controls (127.78,176.84) and (130.07,174.47) .. (130,169.8) ;
\draw    (80,140) .. controls (72.68,119.32) and (51.96,131.63) .. (42.37,138.31) ;
\draw [shift={(40,140)}, rotate = 324.06] [fill={rgb, 255:red, 0; green, 0; blue, 0 }  ][line width=0.08]  [draw opacity=0] (6.25,-3) -- (0,0) -- (6.25,3) -- cycle    ;
\draw    (130,140) .. controls (138.41,120.32) and (154.05,127.85) .. (167.84,138.32) ;
\draw [shift={(170,140)}, rotate = 218.31] [fill={rgb, 255:red, 0; green, 0; blue, 0 }  ][line width=0.08]  [draw opacity=0] (6.25,-3) -- (0,0) -- (6.25,3) -- cycle    ;
\draw   (92.86,139.79) -- (97.2,139.79) -- (98.17,136.57) -- (100.1,143) -- (102.03,136.57) -- (103.96,143) -- (105.89,136.57) -- (107.83,143) -- (109.76,136.57) -- (111.69,143) -- (112.65,139.79) -- (117,139.79) ;
\draw    (98,140) -- (91,140) ;
\draw [shift={(88,140)}, rotate = 360] [fill={rgb, 255:red, 0; green, 0; blue, 0 }  ][line width=0.08]  [draw opacity=0] (3.57,-1.72) -- (0,0) -- (3.57,1.72) -- cycle    ;
\draw    (113,140) -- (120,140) ;
\draw [shift={(123,140)}, rotate = 180] [fill={rgb, 255:red, 0; green, 0; blue, 0 }  ][line width=0.08]  [draw opacity=0] (3.57,-1.72) -- (0,0) -- (3.57,1.72) -- cycle    ;

\draw (7,156.4) node [anchor=north west][inner sep=0.75pt]  [font=\footnotesize]  {$0$};
\draw (197,156.4) node [anchor=north west][inner sep=0.75pt]  [font=\footnotesize]  {$1$};
\draw (71,154.4) node [anchor=north west][inner sep=0.75pt]  [font=\footnotesize]  {$q_{0}$};
\draw (111,154.4) node [anchor=north west][inner sep=0.75pt]  [font=\footnotesize]  {$1-q_{0}$};
\draw (31,154.4) node [anchor=north west][inner sep=0.75pt]  [font=\footnotesize]  {$q_{1}$};
\draw (165,154.4) node [anchor=north west][inner sep=0.75pt]  [font=\footnotesize]  {$q_{2}$};
\draw (72,187) node [anchor=north west][inner sep=0.75pt]  [font=\tiny] [align=left] {(Strictly) Suspensive};
\draw (24,187) node [anchor=north west][inner sep=0.75pt]  [font=\tiny] [align=left] {Decisive};
\draw (156,186.8) node [anchor=north west][inner sep=0.75pt]  [font=\tiny] [align=left] {Decisive};
\draw (85,122) node [anchor=north west][inner sep=0.75pt]  [font=\scriptsize] [align=left] {\textit{diffusion}};

\end{tikzpicture}
        \vspace{-2em}
        \caption{Jump/diffusion policy}
        \label{fig:diffusion}
        \end{minipage}
    \begin{minipage}{0.3\linewidth}
        \tikzset{every picture/.style={line width=0.75pt}} 

\begin{tikzpicture}[x=0.75pt,y=0.75pt,yscale=-1,xscale=1]

\draw    (10,150) -- (200,150) ;
\draw    (10,150) -- (10,145) ;
\draw    (200,150) -- (200,145) ;
\draw    (80,150) -- (80,145) ;
\draw    (130,150) -- (130,145) ;
\draw    (40,150) -- (40,145) ;
\draw    (171,150) -- (171,145) ;
\draw   (79.6,170.6) .. controls (79.67,175.27) and (82.04,177.56) .. (86.71,177.49) -- (94.91,177.36) .. controls (101.58,177.25) and (104.95,179.53) .. (105.02,184.2) .. controls (104.95,179.53) and (108.24,177.15) .. (114.91,177.04)(111.91,177.09) -- (123.11,176.91) .. controls (127.78,176.84) and (130.07,174.47) .. (130,169.8) ;
\draw    (80,140) .. controls (70.6,124.56) and (50.6,133.02) .. (41.59,138.91) ;
\draw [shift={(40,140)}, rotate = 324.06] [fill={rgb, 255:red, 0; green, 0; blue, 0 }  ][line width=0.08]  [draw opacity=0] (8.4,-2.1) -- (0,0) -- (8.4,2.1) -- cycle    ;
\draw    (80,140) .. controls (89.52,113.82) and (146.24,121.8) .. (168.67,138.94) ;
\draw [shift={(170,140)}, rotate = 219.81] [fill={rgb, 255:red, 0; green, 0; blue, 0 }  ][line width=0.08]  [draw opacity=0] (8.4,-2.1) -- (0,0) -- (8.4,2.1) -- cycle    ;
\draw    (127.99,139.21) .. controls (112.54,133.53) and (93.53,135.77) .. (81.79,139.41) ;
\draw [shift={(80,140)}, rotate = 340.95] [fill={rgb, 255:red, 0; green, 0; blue, 0 }  ][line width=0.08]  [draw opacity=0] (8.4,-2.1) -- (0,0) -- (8.4,2.1) -- cycle    ;
\draw [shift={(130,140)}, rotate = 202.9] [fill={rgb, 255:red, 0; green, 0; blue, 0 }  ][line width=0.08]  [draw opacity=0] (8.4,-2.1) -- (0,0) -- (8.4,2.1) -- cycle    ;
\draw    (130,140) .. controls (136.34,125.53) and (152.53,134.86) .. (168.29,139.51) ;
\draw [shift={(170,140)}, rotate = 195.21] [fill={rgb, 255:red, 0; green, 0; blue, 0 }  ][line width=0.08]  [draw opacity=0] (8.4,-2.1) -- (0,0) -- (8.4,2.1) -- cycle    ;
\draw    (130,140) .. controls (123.04,113.4) and (56.86,121.42) .. (41.11,138.66) ;
\draw [shift={(40,140)}, rotate = 306.51] [fill={rgb, 255:red, 0; green, 0; blue, 0 }  ][line width=0.08]  [draw opacity=0] (8.4,-2.1) -- (0,0) -- (8.4,2.1) -- cycle    ;

\draw (7,156.4) node [anchor=north west][inner sep=0.75pt]  [font=\footnotesize]  {$0$};
\draw (197,156.4) node [anchor=north west][inner sep=0.75pt]  [font=\footnotesize]  {$1$};
\draw (71,154.4) node [anchor=north west][inner sep=0.75pt]  [font=\footnotesize]  {$q_{0}$};
\draw (111,154.4) node [anchor=north west][inner sep=0.75pt]  [font=\footnotesize]  {$1-q_{0}$};
\draw (31,154.4) node [anchor=north west][inner sep=0.75pt]  [font=\footnotesize]  {$q_{1}$};
\draw (165,154.4) node [anchor=north west][inner sep=0.75pt]  [font=\footnotesize]  {$q_{2}$};
\draw (72,187) node [anchor=north west][inner sep=0.75pt]  [font=\tiny] [align=left] {(Strictly) Suspensive};
\draw (24,187) node [anchor=north west][inner sep=0.75pt]  [font=\tiny] [align=left] {Decisive};
\draw (155,186.8) node [anchor=north west][inner sep=0.75pt]  [font=\tiny] [align=left] {Decisive};

\end{tikzpicture}
        \vspace{-2em}
        \caption{Credible policy}
        \label{fig:spne}
        \end{minipage}
    \end{figure}

However, if the principal would like the policy to be credible, the interim belief cannot be strictly suspensive. In this example, $\bar{V}_R = \underline{V}_R > 0$ for all $q_t \in (q_0, 1-q_0)$, which is to say that these beliefs are strictly suspensive. It follows that a credible policy must satisfy $q_t\in\{q_0,1-q_0\}$. Therefore, a credible belief process can only jump between $q_0$ and $1-q_0$ or jump to $q^1$ and $q^2$, as illustrated by Figure \ref{fig:spne} (note that the dilution policy is also credible, but the jump/diffusion policy is not credible). Figure \ref{fig:4} illustrates the sample paths of one credible policy, in which beliefs jump between $\bar{q}_{0}$ and $1-\bar{q}_{0}$ before eventually jumping to $q^{1}$ or $q^{2}$.
    
\begin{figure}[htbp]
\centering \includegraphics[width=0.6\textwidth]{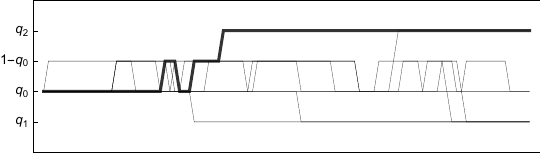} \caption{Sample paths of a credible policy}
\label{fig:4} 
\end{figure}

\paragraph{Example 2 (continued).}
We next illustrate the implications of Propositions \ref{prop:comp} and \ref{prop:spne} by continuing example 2, which discusses our two leading cases.

In the $H=G$ leading case, as discussed above, beliefs move on a line segment. As a result, even though the state space in example 2 is two-dimensional, our analysis for example 1 still applies. Because the prior is uniform, the only suspensive belief is the prior, and consequently the dilution policy is the unique optimal policy. This is illustrated below in Figure \ref{fig:cred1}.

In the decision-irrelevant $G$ leading case, in contrast, there are four possible optimal stopping beliefs under the unique $\pi^*$ that solves \eqref{eq:p:relaxed}. Because of the symmetry of the problem, we will still have $\bar{V}_R = \underline{V}_R$, and the set of beliefs that are suspensive but not strictly suspensive are characterized by the level set $\bar{V}_R(q,\pi^*)=0$. These are illustrated by the curves shown in Figure \ref{fig:cred2}.

Any credible, incentive-compatible policy must maintain interim beliefs on these curves. Because they are curves and not straight lines, this is not compatible with beliefs following a diffusion process. But is compatible with a wide range of jump processes, which are the multi-dimensional analogs of the credible policies illustrated in Figures \ref{fig:spne} and \ref{fig:4}.

    \begin{figure}[htbp]
    \begin{minipage}{0.49\linewidth}
    \centering
        \tikzset{every picture/.style={line width=0.75pt}} 

\begin{tikzpicture}[x=0.75pt,y=0.75pt,yscale=-1,xscale=1]

\draw [color={rgb, 255:red, 189; green, 16; blue, 224 }  ,draw opacity=1 ]   (98.53,231.83) -- (112.27,231.83) ;
\draw    (84.76,210.16) -- (84.76,100.96) ;
\draw    (265.16,209.96) -- (215.16,74.16) ;
\draw [line width=0.75]    (84.76,100.96) -- (215.16,74.16) ;
\draw    (84.76,100.96) -- (265.16,209.96) ;
\draw    (84.76,210.16) -- (265.16,209.96) ;
\draw  [draw opacity=0][fill={rgb, 255:red, 177; green, 208; blue, 244 }  ,fill opacity=1 ] (149.96,87.56) -- (174.96,155.46) -- (174.96,210.06) -- (149.96,142.16) -- cycle ;
\draw    (84.76,210.16) -- (215.16,74.16) ;
\draw  [draw opacity=0][fill={rgb, 255:red, 189; green, 16; blue, 224 }  ,fill opacity=1 ] (162.46,148.81) .. controls (162.46,147.72) and (163.35,146.83) .. (164.44,146.83) .. controls (165.53,146.83) and (166.42,147.72) .. (166.42,148.81) .. controls (166.42,149.9) and (165.53,150.79) .. (164.44,150.79) .. controls (163.35,150.79) and (162.46,149.9) .. (162.46,148.81) -- cycle ;
\draw  [draw opacity=0][fill={rgb, 255:red, 0; green, 122; blue, 255 }  ,fill opacity=1 ] (126.86,152.21) .. controls (126.86,151.12) and (127.75,150.23) .. (128.84,150.23) .. controls (129.93,150.23) and (130.82,151.12) .. (130.82,152.21) .. controls (130.82,153.3) and (129.93,154.19) .. (128.84,154.19) .. controls (127.75,154.19) and (126.86,153.3) .. (126.86,152.21) -- cycle ;
\draw  [draw opacity=0][fill={rgb, 255:red, 0; green, 122; blue, 255 }  ,fill opacity=1 ] (195.66,146.01) .. controls (195.66,144.92) and (196.55,144.03) .. (197.64,144.03) .. controls (198.73,144.03) and (199.62,144.92) .. (199.62,146.01) .. controls (199.62,147.1) and (198.73,147.99) .. (197.64,147.99) .. controls (196.55,147.99) and (195.66,147.1) .. (195.66,146.01) -- cycle ;
\draw  [draw opacity=0][fill={rgb, 255:red, 126; green, 211; blue, 33 }  ,fill opacity=1 ] (96.06,154.61) .. controls (96.06,153.52) and (96.95,152.63) .. (98.04,152.63) .. controls (99.13,152.63) and (100.02,153.52) .. (100.02,154.61) .. controls (100.02,155.7) and (99.13,156.59) .. (98.04,156.59) .. controls (96.95,156.59) and (96.06,155.7) .. (96.06,154.61) -- cycle ;
\draw  [draw opacity=0][fill={rgb, 255:red, 126; green, 211; blue, 33 }  ,fill opacity=1 ] (221.46,143.41) .. controls (221.46,142.32) and (222.35,141.43) .. (223.44,141.43) .. controls (224.53,141.43) and (225.42,142.32) .. (225.42,143.41) .. controls (225.42,144.5) and (224.53,145.39) .. (223.44,145.39) .. controls (222.35,145.39) and (221.46,144.5) .. (221.46,143.41) -- cycle ;

\draw (115.8,225.67) node [anchor=north west][inner sep=0.75pt]  [font=\footnotesize] [align=left] {Credible interim beliefs};
\draw (264.8,210) node [anchor=north west][inner sep=0.75pt]  [font=\small] [align=left] {R0};
\draw (215.6,58.4) node [anchor=north west][inner sep=0.75pt]  [font=\small] [align=left] {R1};
\draw (70.4,211.6) node [anchor=north west][inner sep=0.75pt]  [font=\small] [align=left] {L0};
\draw (66,93.2) node [anchor=north west][inner sep=0.75pt]  [font=\small] [align=left] {L1};

\end{tikzpicture}
        \vspace{-2em}
        \caption{Credible interim beliefs\\ when $H=G$}
        \label{fig:cred1}
        \end{minipage}
    \begin{minipage}{0.49\linewidth}
    \centering
        \tikzset{every picture/.style={line width=0.75pt}} 

\begin{tikzpicture}[x=0.75pt,y=0.75pt,yscale=-1,xscale=1]

\draw [color={rgb, 255:red, 189; green, 16; blue, 224 }  ,draw opacity=1 ]   (131.34,158.11) .. controls (148.23,147.13) and (165.25,132.82) .. (162.74,124.31) .. controls (160.23,115.8) and (138.9,103.47) .. (122.01,100.11) ;
\draw    (83.06,185.66) -- (83.06,76.46) ;
\draw    (263.46,185.46) -- (213.46,49.66) ;
\draw [line width=0.75]    (83.06,76.46) -- (213.46,49.66) ;
\draw    (83.06,76.46) -- (263.46,185.46) ;
\draw    (83.06,185.66) -- (263.46,185.46) ;
\draw  [draw opacity=0][fill={rgb, 255:red, 177; green, 208; blue, 244 }  ,fill opacity=0.81 ] (148.26,63.06) -- (173.26,130.96) -- (173.26,185.56) -- (148.26,117.66) -- cycle ;
\draw    (83.06,185.66) -- (213.46,49.66) ;
\draw  [draw opacity=0][fill={rgb, 255:red, 208; green, 2; blue, 27 }  ,fill opacity=1 ] (160.76,124.31) .. controls (160.76,123.22) and (161.65,122.33) .. (162.74,122.33) .. controls (163.83,122.33) and (164.72,123.22) .. (164.72,124.31) .. controls (164.72,125.4) and (163.83,126.29) .. (162.74,126.29) .. controls (161.65,126.29) and (160.76,125.4) .. (160.76,124.31) -- cycle ;
\draw  [draw opacity=0][fill={rgb, 255:red, 0; green, 122; blue, 255 }  ,fill opacity=1 ] (125.16,127.71) .. controls (125.16,126.62) and (126.05,125.73) .. (127.14,125.73) .. controls (128.23,125.73) and (129.12,126.62) .. (129.12,127.71) .. controls (129.12,128.8) and (128.23,129.69) .. (127.14,129.69) .. controls (126.05,129.69) and (125.16,128.8) .. (125.16,127.71) -- cycle ;
\draw  [draw opacity=0][fill={rgb, 255:red, 0; green, 122; blue, 255 }  ,fill opacity=1 ] (193.96,121.51) .. controls (193.96,120.42) and (194.85,119.53) .. (195.94,119.53) .. controls (197.03,119.53) and (197.92,120.42) .. (197.92,121.51) .. controls (197.92,122.6) and (197.03,123.49) .. (195.94,123.49) .. controls (194.85,123.49) and (193.96,122.6) .. (193.96,121.51) -- cycle ;
\draw  [draw opacity=0][fill={rgb, 255:red, 126; green, 211; blue, 33 }  ,fill opacity=1 ] (120.03,100.11) .. controls (120.03,99.02) and (120.91,98.13) .. (122.01,98.13) .. controls (123.1,98.13) and (123.99,99.02) .. (123.99,100.11) .. controls (123.99,101.2) and (123.1,102.09) .. (122.01,102.09) .. controls (120.91,102.09) and (120.03,101.2) .. (120.03,100.11) -- cycle ;
\draw  [draw opacity=0][fill={rgb, 255:red, 126; green, 211; blue, 33 }  ,fill opacity=1 ] (203.23,154.11) .. controls (203.23,153.01) and (204.11,152.13) .. (205.21,152.13) .. controls (206.3,152.13) and (207.19,153.01) .. (207.19,154.11) .. controls (207.19,155.2) and (206.3,156.09) .. (205.21,156.09) .. controls (204.11,156.09) and (203.23,155.2) .. (203.23,154.11) -- cycle ;
\draw  [draw opacity=0][fill={rgb, 255:red, 126; green, 211; blue, 33 }  ,fill opacity=1 ] (129.36,158.11) .. controls (129.36,157.02) and (130.25,156.13) .. (131.34,156.13) .. controls (132.43,156.13) and (133.32,157.02) .. (133.32,158.11) .. controls (133.32,159.2) and (132.43,160.09) .. (131.34,160.09) .. controls (130.25,160.09) and (129.36,159.2) .. (129.36,158.11) -- cycle ;
\draw  [draw opacity=0][fill={rgb, 255:red, 126; green, 211; blue, 33 }  ,fill opacity=1 ] (186.23,94.11) .. controls (186.23,93.01) and (187.11,92.13) .. (188.21,92.13) .. controls (189.3,92.13) and (190.19,93.01) .. (190.19,94.11) .. controls (190.19,95.2) and (189.3,96.09) .. (188.21,96.09) .. controls (187.11,96.09) and (186.23,95.2) .. (186.23,94.11) -- cycle ;
\draw [color={rgb, 255:red, 189; green, 16; blue, 224 }  ,draw opacity=1 ]   (205.21,154.11) .. controls (189.9,148.8) and (165.25,132.82) .. (162.74,124.31) .. controls (160.23,115.8) and (178.29,98.84) .. (188.21,94.11) ;
\draw [color={rgb, 255:red, 189; green, 16; blue, 224 }  ,draw opacity=1 ]   (101.03,206.83) -- (114.77,206.83) ;

\draw (263.1,185.5) node [anchor=north west][inner sep=0.75pt]  [font=\small] [align=left] {R0};
\draw (213.9,33.9) node [anchor=north west][inner sep=0.75pt]  [font=\small] [align=left] {R1};
\draw (68.7,187.1) node [anchor=north west][inner sep=0.75pt]  [font=\small] [align=left] {L0};
\draw (64.3,68.7) node [anchor=north west][inner sep=0.75pt]  [font=\small] [align=left] {L1};
\draw (118.3,200.67) node [anchor=north west][inner sep=0.75pt]  [font=\footnotesize] [align=left] {Credible interim beliefs};

\end{tikzpicture}
        \vspace{-2em}
        \caption{Credible interim beliefs with a decision-irrelevant $G$ }
        \label{fig:cred2}
        \end{minipage}
    \end{figure}

\section{\label{subsec:Optimal-Policy-without}Extension: Optimal Policy without Capacity
Constraints}

In this section, we consider a modified version of our model with
a constant cost of delay in which the agent has an unlimited capacity
to acquire information, and the principal's goal is to maximize the
time spent by the agent on the platform. This modified model is the
continuous-time analog of the discrete-time models studied by \citet{jan2020dynamic}
and \citet{koh2022attention}. The purpose of this section is to illustrate
the connection between these models and a special case of our more
general framework. The agent-optimal policy in the modified model
is for the agent to learn the optimal action with certainty immediately,
as this avoids entirely the cost of delay.

The principal in this case chooses his policies from the set $\mathcal{\bar{A}}(\bar{q}_{0})$,
which is the set of probability measures on $(\Omega,\mathcal{F})$
such that $q$ is martingale belief processes with $q_{0}=\bar{q}_{0}$
and non-negative stopping times $\tau$. Note that this set does not
impose the constraint on the rate of information acquisition, \eqref{eq:general-constraint},
that was imposed in our main analysis. The principal solves 
\begin{align*}
J(\bar{q}_{0}) & =\sup_{(P,\tau)\in\bar{\mathcal{A}}(\bar{q}_{0})}\mathbb{E}^{P}[\tau|\mathcal{F}_{0}]
\end{align*}
subject to the same constraint with respect to the agent's stopping
decision, 
\[
\tau\in\arg\max_{\tau'\in\mathcal{T}}\mathbb{E}^{P}[\widehat{u}(q_{\tau'})-\int_{0}^{\tau'} \kappa dt|\mathcal{F}_{0}].
\]

The first part of Lemma \ref{lem:neccesary} remains applicable: if
$\pi$ is the law of $q_{\tau}$, we must have $\mathbb{E}^{\pi}[\widehat{u}(q)-\widehat{u}(\bar{q}_{0})]\geq \kappa \mathbb{E}^{P}[\tau|\mathcal{F}_{0}]$.
Now observe that the expected utility under $\pi$ is bounded above
by the utility of fully learning the state. Let $\pi^{\max}\in\mathcal{P}(X)$
be the unique probability measure that places full support on the
extreme points of $\mathcal{P}(X)$ (i.e. the $e_{x}$ basis vectors),
with $\pi^{max}(e_{x})=\bar{q}_{0,x}$.

By the convexity of $\widehat{u}$, for all $\pi$ such that $E^{\pi}[q]=\bar{q}_{0}$,
\[
\mathbb{E}^{\pi^{max}}[\widehat{u}(q)-\widehat{u}(\bar{q}_{0})]\geq\mathbb{E}^{\pi}[\widehat{u}(q)-\widehat{u}(\bar{q}_{0})]\geq \kappa \mathbb{E}^{P}[\tau|\mathcal{F}_{0}]\geq \kappa J(\bar{q}_{0}).
\]
It follows that $ \kappa ^{-1}\mathbb{E}^{\pi^{max}}[\widehat{u}(q)-\widehat{u}(\bar{q}_{0})]$
is an upper bound on the utility achievable in this problem.

But now observe that the $\alpha$-dilution of $\pi^{\max}$, as defined
in \eqref{eq:Poisson}, with intensity $\alpha= \kappa (\mathbb{E}^{\pi^{max}}[\widehat{u}(q)-\widehat{u}(\bar{q}_{0})])^{-1}$,
achieves this bound. Moreover, the policy is incentive-compatible:
at each instant, the agent compares the utility benefit of the signal's
arrival, $\alpha\mathbb{E}^{\pi^{max}}[\widehat{u}(q)-\widehat{u}(\bar{q}_{0})]$,
against the cost of delay, $ \kappa $, and is willing to continue.
It follows that this policy is optimal.

This policy is the optimal policy in a special case of our main model.
Consider the case of our main model in which $ \kappa =\chi$
and $G(q)=H(q)=\widehat{u}(q)$. In this special case, $\pi^{\max}$ is
an optimal policy in our relaxed problem \eqref{eq:p:relaxed}, because
the constraint in that problem satisfied for any policy, and the process
described in Proposition \ref{prop:implementation} is exactly the
process above. The intuition behind this equivalence is that in our
main model, it is always optimal for the principal to exhaust the
agent's information processing capacity. If exhausting this capacity
necessarily involves satisfying the incentive compatibility constraint,
then the principal is free to choose the process that simultaneously
exhausts the agent's capacity and takes as long as possible to reach
a decisive belief.

\section{\label{sec:Conclusion}Conclusion}

We have considered the problem of a principal who provides information
to an agent so as to maximize the attention the agent pays to the
principal's information (engagement). The agent values this information
for instrumental purposes, is rational and Bayesian, and faces a constraint
on the rate at which she can process information. Our main results
are (i) that by maximizing engagement, the principal leaves the agent
no better off than if she could not receive any information at all,
and (ii) that the agent will end up holding extreme beliefs, relative
to a benchmark in which the agent could choose the information for
herself, and (iii) that optimal signals in the absence of commitment cannot either increase or decrease the value of future information prior to stopping (i.e. that they are suspensive but not strictly suspensive). Our results highlight the distortions created when an information provided maximizes engagement.

\fontsize{11pt}{12pt}\selectfont

\setlength{\bibsep}{0pt}
\bibliographystyle{plainnat}
\bibliography{references_comp}

\linespread{1.2}
\normalsize

\appendix
\setlength\abovedisplayskip{2pt}
\setlength\belowdisplayskip{2pt}
\section{Omitted Proofs for Section \ref{sec:Optimal-Policy}}

\subsection{\label{ssec:proof:necessary}Proof of Lemma \ref{lem:neccesary}}
\begin{proof}
Condition (1): let $\tau'\equiv0$. Under this policy, the agent's utility from
stopping is $\widehat{u}(\bar{q}_0)$. The optimality of $\tau$ and $\mathbb{E}^\pi [\hat{u}(q)] = \mathbb{E}^P [\hat{u}(q_\tau)]$ implies
condition (1).

Condition (2): This follows from $\mathbb{E}^\pi [H(q)] = \mathbb{E}^P [H(q_\tau)]$, the optional stopping theorem, and the supermartingale property of $H(q_t)-\chi t$. The last two imply that $\mathbb{E}^P [H(q_\tau)-\chi \tau ] \leq H(q_0)$.
\end{proof}
\subsection{\label{ssec:proof:existence}Proof of Proposition \ref{prop:existence}}
\begin{proof}
      Assumption \ref{ass:imperfect} implies that the constraint in \eqref{eq:p:relaxed} must be binding, as otherwise, the optimal $\pi^*$ will achieve the principal's first best, leaving the agent with negative surplus. Then, \eqref{eq:p:relaxed} is a convex optimization problem that satisfies the conditions in Theorem 4 of \citet{zhong2018information}, which immediately implies the existence of $\lambda$ and of a finite support solution $\pi^*$. 

    Next, we prove $\mathbb{E}^{\pi^*}[\bar{J}(q)]=0$. Suppose for the purpose of contradiction that $\mathbb{E}^{\pi^*}[J(q)]>0$. For each $q\in\text{Supp}(\pi^*)$, let $\pi^q$ be a probability measure that implements $\bar{J}(q)$. Define $\pi'\in \Pi(\bar{q}_0)$ by  $d\pi'(q')=\mathbb{E}^{\pi^*(q)}[d\pi^q(q')]$, observing that 
    \[
    \mathbb{E}^{\pi'(q')}[q'] = \mathbb{E}^{\pi^*(q)}[\mathbb{E}^{\pi^q(q')}[q']]=\mathbb{E}^{\pi^*(q)}[q] = \bar{q}_0.
    \]
    Then,
    \begin{align*}
        \mathbb{E}^{\pi'(q')}[G(q')-G(\bar{q}_0)]=&\mathbb{E}^{\pi^*(q)}[\mathbb{E}^{\pi^{q}(q')}[G(q')]-G(q)+G(q)-G(\bar{q}_0)]\\
        =&\mathbb{E}^{\pi^*(q)}[\mathbb{E}^{\pi^{q}(q')}[G(q')-G(q)]]+\bar{J}(\bar{q}_0)\\
        =&\mathbb{E}^{\pi^*(q)}[\bar{J}(q)]+\bar{J}(\bar{q}_0)\\
        >&\bar{J}(\bar{q}_0).
    \end{align*}
    Meanwhile,
    \begin{align*}
        \mathbb{E}^{\pi'(q')}\left[\widehat{u}(q')-\frac{ \kappa }{\chi}H(q')-\widehat{u}(\bar{q}_0)+\frac{ \kappa }{\chi}H(\bar{q}_0)\right]=&\mathbb{E}^{\pi^*(q)}\left[\mathbb{E}^{\pi^{q}(q')}\left[\widehat{u}(q')-\frac{ \kappa }{\chi}H(q')\right]-\widehat{u}(\bar{q}_0)+\frac{ \kappa }{\chi}H(\bar{q}_0)\right]\\
        =&\mathbb{E}^{\pi^*(q)}\left[\mathbb{E}^{\pi^{q}(q')}\left[\widehat{u}(q')-\frac{ \kappa }{\chi}H(q')-\widehat{u}(q)+\frac{ \kappa }{\chi}H(q)\right]\right]\\
        &+\mathbb{E}^{\pi^*(q)}\left[\widehat{u}(q)-\frac{ \kappa }{\chi}H(q)-\widehat{u}(\bar{q}_0)+\frac{ \kappa }{\chi}H(\bar{q}_0)\right]\\
        \ge & 0.
    \end{align*}
    Therefore, $\pi'$ is feasible in \eqref{eq:p:relaxed} and strictly improves on $\pi^*$, contradicting the optimality of $\pi^*$. It follows by the non-negativity of $J(q)$ that $\mathbb{E}^{\pi^*}[J(q)]=0$.
\end{proof}

\subsection{\label{ssec:proof:implementation}Proof of Proposition \ref{prop:implementation}}
\begin{proof}
Under the $\alpha$-dilution policy, if $t < \tau$,
\begin{align*}
\mathbb{E}^P[H(q_{t+\delta})-\chi(t+\delta)|\mathcal F_t,\tau>t] &= (1-e^{-\alpha \delta}) \mathbb{E}^\pi[H(q) - H(\bar{q}_0)]  + H(\bar{q}_0) -\chi(t+\delta).
\end{align*}
Using $\alpha = \chi (\mathbb{E}^\pi[H(q) - H(\bar{q}_0)])^{-1}$ and $x \geq 1-e^{-x}$,
\[
\mathbb{E}^P[H(q_{t+\delta})-\chi(t+\delta)|\mathcal F_t,\tau>t] \leq H(\bar{q}_0) -\chi(t).
\] as required by \eqref{eq:general-constraint}. That constraint holds by assumption if $t \geq \tau$.

By $q_{\tau}\sim\pi$, the principal's utility is $\mathbb{E}^{\pi}[G(q)-G(\bar{q}_{0})]$. What remains to be verified is the agent's optimality condition.

Take any admissible stopping time $\tau'$, 
\begin{align*}
\mathbb{E}^{P}[\widehat{u}(q_{\tau'})- \kappa \tau'|\mathcal{F}_{0}]= & \mathrm{Prob}(\tau'<\tau |\mathcal{F}_{0})\left(\widehat{u}(\bar{q}_0)- \kappa \mathbb{E}^{P}\left[\tau'|\tau'<\tau, \mathcal{F}_{0}\right]\right)\\
 & +\mathrm{Prob}(\tau'\ge\tau |\mathcal{F}_{0})\left(\mathbb{E}^{P}\left[\widehat{u}(q_{\tau})- \kappa \tau'|\tau'\ge\tau, \mathcal{F}_{0}\right]\right)\\
\le & \mathrm{Prob}(\tau'<\tau|\mathcal{F}_{0})\left(\mathbb{E}^{\pi}[\widehat{u}(q)]- \kappa \mathbb{E}^{P}[\tau|\mathcal{F}_{0}]- \kappa \mathbb{E}^{P}\left[\tau'|\tau'<\tau, \mathcal{F}_{0}\right]\right)\\
 & +\mathrm{Prob}(\tau'\ge\tau|\mathcal{F}_{0})\left(\mathbb{E}^{\pi}[\widehat{u}(q)]- \kappa \mathbb{E}^{P}\left[\tau|\tau'\ge\tau, \mathcal{F}_{0}\right]\right)\\
= & \mathrm{Prob}(\tau'<\tau | \mathcal{F}_{0})\left(\mathbb{E}^{\pi}[\widehat{u}(q)- \kappa \mathbb{E}^{P}\left[\tau|\tau'<\tau, \mathcal{F}_{0}\right]\right)\\
 & +\mathrm{Prob}(\tau'\ge\tau | \mathcal{F}_{0})\left(\mathbb{E}^{\pi}[\widehat{u}(q)]- \kappa \mathbb{E}^{P}\left[\tau|\tau'\ge\tau, \mathcal{F}_{0}\right]\right)\\
= & \mathbb{E}^{P}[\widehat{u}(q)- \kappa \tau|\mathcal{F}_{0}].
\end{align*}
The first equality is from the definition of conditional expectations, the process defining $q_{t}$, equation \eqref{eq:Poisson}, and the convention that $q_t =q_\tau$ for $t>\tau$. The
first inequality is from the constraint $\mathbb{E}^{\pi}[\widehat{u}(q)-\widehat{u}(\bar{q}_{0})]\ge \kappa \mathbb{E}^{P}[\tau|\mathcal{F}_{0}]$
in the relaxed problem \eqref{eq:p:relaxed} and $E[\tau|\tau'\geq\tau]\leq E[\tau'|\tau'\geq\tau]$.
The second equality is from the memorylessness property of $\tau$,\footnote{Specifically, $\mathbb{E}^{P}\left[\tau|\tau'<\tau, \mathcal{F}_{0}\right]=\mathbb{E}^{P}\left[\tau-\tau'|\tau'<\tau, \mathcal{F}_{0}\right]+\mathbb{E}^{P}\left[\tau'|\tau'<\tau, \mathcal{F}_{0}\right]=\mathbb{E}^{P}\left[\tau|\mathcal{F}_{0}\right]+\mathbb{E}^{P}\left[\tau'|\tau'<\tau,\mathcal{F}_{0}\right]$.}
and the last from the definition of conditional expectations. Note
that if the constraint in \eqref{eq:p:relaxed} is slack, the first
inequality is strict if $\tau'$ is not equal to $\tau$ $P$-a.e.
In this case, the $\alpha$-dilution of $\pi$ strongly implements
principal's utility level $\mathbb{E}^{\pi}[G(q)-G(\bar{q}_{0})]$. 
\end{proof}

\subsection{\label{ssec:proof:agent}Proof of \Cref{prop:agent}}

\begin{proof}
    If $V^B(\bar{q}_0)>\widehat{u}(\bar{q}_0)$, then let $\pi$ implement $V^B(\bar{q}_0)$. By the assumption that full information is strictly beneficial, let $\pi'$ be the distribution of posterior beliefs corresponding to full information and note by Assumption $\mathbb{E}^{\pi'}[G(q')-G(\bar{q}_0)]>0$. Then, the dilution of a convex combination of $\pi$ and $\pi'$ is feasible, incentive compatible for sufficiently small weight on $\pi'$ and strictly profitable for the principal. Therefore, $J(\bar{q}_0)>0$. 

    Next, suppose for the purpose of contradiction that  $\mathbb{E}^{\pi_A}[V^B(q)-\widehat{u}(q)]>0$.  For each $q\in\text{Supp}(\pi_A)$, let $\pi^q$ be the probability measure that implements $V^B(q)$. Define $\pi' \in \Pi(\bar{q}_0)$ by $d\pi'(q')=\mathbb{E}^{\pi_A(q)}[d\pi^q(q')]$. Then     \begin{align*}
        \mathbb{E}^{\pi'(q')}\left[\widehat{u}(q')-\frac{ \kappa }{\chi}H(q')-\widehat{u}(\bar{q}_0)+\frac{ \kappa }{\chi}H(\bar{q}_0)\right]=&\mathbb{E}^{\pi_A(q)}\left[\mathbb{E}^{\pi^q(q')}\left[\widehat{u}(q')-\frac{ \kappa }{\chi}H(q')\right]-\widehat{u}(\bar{q}_0)+\frac{ \kappa }{\chi}H(\bar{q}_0)\right]\\
        =&\mathbb{E}^{\pi_A(q)}\left[\mathbb{E}^{\pi^q(q')}\left[\widehat{u}(q')-\frac{ \kappa }{\chi}H(q')-\widehat{u}(q)+\frac{ \kappa }{\chi}H(q)\right]\right]\\
        &+\mathbb{E}^{\pi_A(q)}\left[\widehat{u}(q)-\frac{ \kappa }{\chi}H(q)-\widehat{u}(\bar{q}_0)+\frac{ \kappa }{\chi}H(\bar{q}_0)\right]\\
        =& \mathbb{E}^{\pi_A(q)} \left[ V^B(q) - \hat{u}(q) \right] + V^B(\bar{q}_0) -\widehat{u}(\bar{q}_0) \\
        >& V^B(\bar{q}_0)-\widehat{u}(\bar{q}_0).
    \end{align*}
    This contradicts the optimality of $\pi_A$, and by $V^B(q)\geq \hat{u}(q)$ we must have $\mathbb{E}^{\pi_A}[V^B(q)-\hat{u}(q)]=0$.
\end{proof}

\subsection{\label{subsec:proof:extreme-belief}Proof of Proposition \ref{prop:extreme-beliefs}}
\begin{proof}
Observe first by Lemma \ref{prop:agent} that 
$ V^B (\bar{q}_{0})>\widehat{u}(\bar{q}_0)$
implies $\lambda>0$ (where $\lambda$ is defined as in Proposition \ref{prop:existence}).

Define 
\begin{align*}
U^{a}(q)=\widehat{u}(q)-\frac{ \kappa }{\chi}H(q)
\end{align*}
and define 
\begin{align*}
U^{p}(q)=\widehat{u}(q)-\frac{ \kappa }{\chi}H(q)+\lambda G(q).
\end{align*}
Define $H^{*}(q)$ as the concave envelope of $\frac{ \kappa }{\chi}H(q)-\lambda G(q)$.
Observe that any
\[
\pi^{*}\in\arg\max_{\pi\in\mathcal{P}(\mathcal{P}(X)):\mathbb{E}^{\pi}[q]=\bar{q}_{0}}\mathbb{E}^{\pi}\left[\widehat{u}(q)-\frac{ \kappa }{\chi}H(q)+\lambda G(q)\right]
\]
places full support on points where $H^{*}(q)=\frac{ \kappa }{\chi}H(q)-\lambda G(q)$,
by the convexity of $\widehat{u}(q)$, and for each $q'\in\mathcal{P}(X)$
there exists a $\pi'\in\mathcal{P}(\mathcal{P}(X))$ with $\mathbb{E}^{\pi'}[q]=q'$
and 
\[
\widehat{u}(q')-H^{*}(q')\leq\mathbb{E}^{\pi'}[\widehat{u}(q)-\frac{ \kappa }{\chi}H(q)+\lambda G(q)],
\]
and therefore 
\[
\pi^{*}\in\arg\max_{\pi\in\mathcal{P}(\mathcal{P}(X)):\mathbb{E}^{\pi}[q]=\bar{q}_{0}}\mathbb{E}^{\pi}\left[\widehat{u}(q)-H^{*}(q)\right].
\]

Now observe that $q^{*}\in Q^{p}(\bar{q}_{0})$ and $ J (\bar{q}_{0})>0$
implies by Proposition \ref{prop:existence} and Theorem \ref{thm:main-result}
that there exists a $\pi^{*}\in\mathcal{P}(\mathcal{P}(X))$ with
$q^{*}\in\mathrm{Supp}\pi^{*}$ such that $\pi^{*}$ is a solution
to a static rational inattention problem. By the Lagrangian lemma
of \citet{caplin2017shannon} applied to the problem with the cost
function $H^{*}$, there exists a linear function $L^{p}:\mathcal{P}(X)\rightarrow\mathbb{R}$
such that 
\[
\widehat{u}(q)-H^{*}(q)-L^{p}(q)\leq\widehat{u}(q^{*})-H^{*}(q^{*})-L^{p}(q^{*})
\]
with equality for all $q\in\mathrm{Supp}\pi^{*}$. By $H^{*}(q)\geq\frac{ \kappa }{\chi}H(q)-\lambda G(q)$,
with equality for all $q\in\mathrm{Supp}\pi^{*}$, 
\[
U^{p}(q)-L^{p}(q)\leq U^{p}(q^{*})-L^{p}(q^{*}),
\]
with equality for all $q\in\mathrm{Supp}\pi^{*}$.

Note that $\widehat{u}$ is piecewise linear and convex and both $H$
and $G$ are continuously differentiable. It follows that $\widehat{u}$
can not involve a convex kink at any $q\in\mathrm{Supp}\pi^{*}$,
and hence if such a $q$ lies in the relative interior of $\mathcal{P}(X)$,
\[
\nabla U^{p}(q)=\nabla L^{p}(q).
\]

The same conclusions apply for any $q^{+}\in Q^{a}(\bar{q}_{0})$:
there must exist an associated $\pi^{+}\in\mathcal{P}(\mathcal{P}(X))$
and linear function $L^{a}(q)$ such that for all $q$, 
\[
U^{a}(q)-L^{a}(q)\leq U^{a}(q^{*})-L^{a}(q^{*}),
\]
with equality for all $q\in\mathrm{Supp}\pi^{+}$, and with $\nabla U^{p}(q)=\nabla L^{p}(q)$
for all such $q$ in the relative interior of $\mathcal{P}(X)$.

We first prove that, under the assumption that $Q^{a}(\bar{q}_{0})\subseteq\mathrm{relint}(\mathcal{P}(X))$,
there cannot exist a $q^{*}\in Q^{p}(\bar{q}_{0})\cap Q^{a}(\bar{q}_{0})$.
Proof by contradiction: suppose such a $q^{*}$ exists, as part of
a solution to the relaxed principal's problem $\pi^{*}\in\mathcal{P}(\mathcal{P}(X))$.
We must have, by the linearity of the functions $L^{a}$ and $L^{p}$,
\[
\nabla L^{p}(q)-\nabla L^{a}(q)=\nabla G(q^{*})
\]
for all $q\in\mathcal{P}(X)$. Therefore, 
\begin{align*}
\mathbb{E}^{\pi^{*}}[U^{p}(q)] & =U^{p}(q^{*})+(\bar{q}_{0}-q^{*})(\nabla G(q^{*})+L^{a}(q^{*}))\\
 & =U^{a}(q^{*})+G(q^{*})+(\bar{q}_{0}-q^{*})(\nabla G(q^{*})+L^{q}(q^{*}))\\
 & =\mathbb{E}^{\pi^{*}}[U^{a}(q)+q\cdot\nabla G(q^{*})]+G(q^{*})-q^{*}\cdot\nabla G(q^{*})\\
 & \leq\mathbb{E}^{\pi^{*}}[U^{a}(q)+G(q)]=\mathbb{E}^{\pi^{*}}[U^{p}(q)].
\end{align*}
Thus, we must have 
\[
G(q^{*})+\mathbb{E}^{\pi^{*}}[(q-q^{*})\nabla G(q^{*})]=\mathbb{E}^{\pi^{*}}[G(q)],
\]
but this requires (by $ J (\bar{q}_{0})=E^{\pi^{*}}[G(q)]-G(\bar{q}_{0})]>0$)
that
\[
G(q^{*})+(\bar{q}_{0}-q^{*})\nabla G(q^{*})>G(\bar{q}_{0}),
\]
contradicting the convexity of $G$. We conclude that $q^{*}\in Q^{p}(\bar{q}_{0})$
implies $q^{*}\notin Q^{a}(\bar{q}_{0})$.

Now suppose $q^{*}\in Q^{p}(\bar{q}_{0})\cap\mathrm{Conv}Q^{a}(\bar{q}_{0})$.
Definitionally, there is some $\pi'\in\mathcal{P}(Q^{a}(\bar{q}_{0}))$
such that $\mathbb{E}^{\pi'}[q]=q^{*}$, and by $q^{*}\notin Q^{a}(\bar{q}_{0})$,
\[
U^{a}(q^{*})<\mathbb{E}^{\pi'}[U^{a}(q)].
\]
It follows by the convexity of $G$ that 
\[
U^{p}(q^{*})<\mathbb{E}^{\pi'}[U^{a}(q)+G(q^{*})]\leq\mathbb{E}^{\pi'}[U^{p}(q)],
\]
which contradicts $q^{*}\in Q^{p}(\bar{q}_{0})$. We conclude that
$q^{*}\in Q^{p}(\bar{q}_{0})$ implies $q^{*}\notin\mathrm{Conv}Q^{a}(\bar{q}_{0})$
\end{proof}

\subsection{\label{ssec:proof:attention}Proof of \Cref{prop:attention}}
\begin{proof}
    Per Proposition \ref{prop:existence}, the optimal $\pi$ necessarily leads to $J(\bar{q}_0)=\mathbb{E}^{\pi}[\widehat{u}(q)-\widehat{u}(\bar{q}_0)]$. Then, \eqref{eq:p:relaxed} reduces to: 
    \begin{align*}
\sup_{\substack{\pi\in\mathcal{P}(\mathcal{P}(X)):\mathbb{E}^{\pi}[q]=q_{0}}
} & \mathbb{E}^{\pi}[\widehat{u}(q)-\widehat{u}(\bar{q}_{0})]\\
s.t.\  & \frac{ \kappa }{\chi}\mathbb{E}^{\pi}[H(q)-H(\bar{q}_{0})]\le\mathbb{E}^{\pi}[\widehat{u}(q)-\widehat{u}(\bar{q}_{0})].
\end{align*}

Proposition \ref{prop:existence} implies that $\pi$ could be picked with finite support. Let $s$ be the signal that induces $\pi$ per the Bayes rule. Let $p(s|x_1,x_2)$ denote the conditional distribution of $s$. Let $s'$ be another signal that has conditional distribution $p(s'=z|x_1,x_2)=\mathbb{E}^{\bar{q}_0}[p(s=z|x_1,x_2)|x_1]$, i.e., $s'$ contains $s$'s information about $x_1$ only. Let $\pi'$ be the distribution of posterior induced by $s'$. Evidently, since $u$ depends on $x_1$ only, 
\begin{align*}
    \mathbb{E}^{\pi}[\widehat{u}(q)-\widehat{u}(\bar{q}_{0})]=\mathbb{E}^{\pi_1}[\widehat{u}(q,\bar{q}_0|_{X_2})-\widehat{u}(\bar{q}_{0})]=\mathbb{E}^{\pi'}[\widehat{u}(q)-\widehat{u}(\bar{q}_{0})].
\end{align*}
Therefore, $\pi'$ is feasible. Next, we show that $\pi'$ weakly improves $\pi$. Note that 
\begin{align*}
    &\mathbb{E}^{\pi}[H(q)-H(\bar{q}_0)]=I(s;x_1,x_2)=\mathbb{E}^{\bar{q}_0}[D_{KL}(p(s|x_1,x_2))||p(s)];\\
    &\mathbb{E}^{\pi'}[H(q)-H(\bar{q}_0)]=I(s';x_1,x_2)=\mathbb{E}^{\bar{q}_0}[D_{KL}(p(s'|x_1))||p(s')]\\
    \implies & \mathbb{E}^{\pi}[H(q)-H(\bar{q}_0)] - \mathbb{E}^{\pi'}[H(q)-H(\bar{q}_0)]\\
    &=\mathbb{E}^{\bar{q}_0}\left[\log\frac{p(s|x_1,x_2)}{p(s)}-\log\frac{p(s|x_1)}{p(s)}\right]\\
    &=\mathbb{E}^{\bar{q}_0}[D_{KL}(p(s|x_1,x_2)||p(s|x_1))]\ge 0,
\end{align*}
strictly if $p(s|x_1,x_2)\neq p(s|x_1)$ for some $(x_1,x_2)$ in the support of $\bar{q}_0$. 

\end{proof}

\subsection{\label{ssec:proof:noise}Proof of \Cref{prop:noise}}

\begin{proof}
    We slightly abuse notation and let $x_1,x_2$ denote the random variables that determines the state realization. $\forall \pi$, let $\pi_i=\pi|_{X_i}$. Let $s_1$ be the random variable that is independent to $x_2$ and induces posterior belief $\pi_1$ for $x_1$. Let $s_2$ be the random variable that induces posterior belief $\pi$. Let $I$ denote mutual information. Then,
    \begin{align*}
        \mathbb{E}^{\pi}[H(q)-H(\bar{q}_{0})]=&I(s_1,s_2;x_1,x_2)\\
        =&I(s_1,s_2;x_1)+I(s_1,s_2;x_2|x_1)\\
        \ge&I(s_1;x_2)+I(s_2;x_2|x_1)\\
        =& I(s_1;x_2)+I(s_2,x_1;x_2)-\underbrace{I(x_1;x_2)}_{=0}\\
        \ge&I(s_1;x_2)+I(s_2;x_2)\\
        =&\mathbb{E}^{\pi_1}[H(q,\bar{q}_0|_{X_2})-H(\bar{q}_{0})]+\mathbb{E}^{\pi_2}[H(\bar{q}_0|_{X_1},q)-H(\bar{q}_{0})].
    \end{align*}
    In the derivation, $(q,\bar{q}_0|_{x_2})$ denotes the belief vector where the probability of state $x_1$ is given by $q$ and the probability of state $x_2$ is given by $\bar{q}_0$ restricted to $x_2$, respectively. Belief vector $(\bar{q}_0|_{X_1},q)$ is defined analogously. 
    Therefore, let $\pi'=\pi_1\otimes\pi_2$, 
    \begin{align*}
        &\mathbb{E}^{\pi}[H(q)-H(\bar{q}_{0})]\ge \mathbb{E}^{\pi_1}[H(q,\bar{q}_0|_{X_2})-H(\bar{q}_{0})]+\mathbb{E}^{\pi_2}[H(\bar{q}_0|_{X_1},q)-H(\bar{q}_{0})]= \mathbb{E}^{\pi'}[H(q)-H(\bar{q}_{0})];\\
        &\mathbb{E}^{\pi}[\widehat{u}(q)-\widehat{u}(\bar{q}_{0})]=\mathbb{E}^{\pi_1}[\widehat{u}(q,\bar{q}_0|_{X_2})-\widehat{u}(\bar{q}_{0})]=\mathbb{E}^{\pi'}[\widehat{u}(q)-\widehat{u}(\bar{q}_{0})];\\
        &\mathbb{E}^{\pi}[G(q)-G(\bar{q}_{0})]=\mathbb{E}^{\pi_2}[G(\bar{q}_0|_{X_1},q)-G(\bar{q}_{0})]=\mathbb{E}^{\pi'}[G(q)-G(\bar{q}_{0})].
    \end{align*}
    Then, $\pi'$ is feasible and weakly improves $\pi$. Proposition \ref{prop:existence} implies that $J(\bar{q}_0)$ can be solved by maximizing
    \begin{align*}
        &\mathbb{E}^{\pi_1\otimes\pi_2}\left[\widehat{u}(q)-\frac{ \kappa }{\chi}H(q)+\lambda G(q)\right]\\
        =&\underbrace{\mathbb{E}^{\pi_1}\left[\widehat{u}(q,\bar{q}_0|_{X_2})-\frac{ \kappa }{\chi}H(q,\bar{q}_0|_{X_2})\right]}_{\text{maximized by $\pi^*_1$ solving $V^B$}}+\mathbb{E}^{\pi_2}\left[-\frac{ \kappa }{\chi}H(\bar{q}_0|_{X_1},q)+\lambda G(\bar{q}_0|_{X_1},q)\right].
    \end{align*}
\end{proof}

\section{Omitted Proofs for \Cref{sec:Dynamics}}
\subsection{\label{ssec:proof:comp}Proof of \cref{prop:comp}}
\begin{proof}
By Theorem \ref{thm:main-result}, there is some $\pi\in\Pi^{*}$
that is the law of $q_{\tau}$ under the optimal policy. By the martingale property of the belief
process, $q_{t}=E^P[q_{\tau}|\mathcal{F}_{t},t<\tau]$ for
all $t\in[0,\tau)$, and thus $q_{t}$ must lie in the convex
hull of the support of $\pi$; hence, $\overline{V}_R(q_t,\pi)$ is well defined.

For the purpose of contradiction, suppose that the inequality in Proposition \ref{prop:comp} does not hold. Then, there exists $\epsilon>0$ and a positive probability event $\Omega'\in \mathcal{F}_t|_{t<\tau}$ s.t. $\forall \omega\in \Omega'$, $\overline{V}_R(q_t(\omega),\pi)<-\epsilon$. By the definition of $\overline{V}$,
\begin{align*}
    \mathbb{E}\left[\widehat{u}(q_{\tau})-\widehat{u}(q_t)-\frac{ \kappa }{\chi}(H(q_{\tau})-H(q_t))\Big| t<\tau,\omega\in\Omega'\right]\le -\epsilon.
\end{align*}
However, this violates the incentive compatibility condition of the agent. By deviating to stopping immediately at event $\Omega'$, the agent's expected utility improves by $\epsilon\cdot\mathrm{Prob}^P(\Omega')$. Contradiction.
\end{proof}

\subsection{\label{subsec:Proof-of-Lemma-support}Proof of Lemma \ref{lem:support}}
\begin{proof}
Define: 
\begin{align*}
\tau'=\tau\wedge q_{t}\text{ first leaves }E^{A}.
\end{align*}
By definition $\tau'\le\tau$. Let $\bar{E}^A$ denote the closure of $E^A$. Since $\mathrm{Supp(q_{\tau'-})}\subset E^{A}$, $\mathrm{Supp}(q_{\tau'})\subset \bar{E}^A$.
We prove by contradiction that $\tau'=\tau$. Suppose $\tau'<\tau$
on a positive measure, on which 
\begin{align*}
\mathbb{E}^{P}\left[\widehat{u}(q_{\tau})-(\tau-\tau')\cdot \kappa \big|\tau'<\tau\right]= & \mathbb{E}^{P}\left[\mathbb{E}^{P}\left[\widehat{u}(q_{\tau})-(\tau-\tau')\cdot \kappa \big|\tau'\right]\big|\tau'<\tau\right]\\
< & \mathbb{E}^{P}\left[\widehat{u}(q_{\tau'})\big|\tau'<\tau\right].
\end{align*}
The inequality is from the fact that $\tau'<\tau\implies q_{\tau'}\not\in E^{A}$.
Therefore, $\tau'$ strictly improves upon $\tau$; hence, $\tau$ is
not incentive compatible. The contradiction implies that $\mathrm{Supp}(q_{\tau})\subset\bar{E}^A$. 
\end{proof}

\subsection{\label{subsec:Proof-of-Proposition-bound-binary}Proof of Proposition
\ref{prop:bound:binary}}
\begin{proof}
As is discussed in Section \ref{sec:Optimal-Policy}, the agent-optimal
policy can be solved by concavifying $\widehat{u}(q)-\frac{ \kappa }{\chi}H(q)$.
Therefore, there exists a linear function $L(q)$ that is weakly higher
than $\widehat{u}(q)-\frac{ \kappa }{\chi}H(q)$ and tangents
it at two beliefs $q^{1}<\bar{q}_{0}<q^{2}$. WLOG, let $q^{1}$ and
$q^{2}$ be the smaller and largest such beliefs, respectively. Since
$\widehat{u}-\frac{ \kappa }{\chi}H$ is piece-wise strictly concave,
the interval $[q^{1},q^{2}]$ is bounded away from the rest of $E^{A}$.

Proposition \ref{lem:support} implies that any admissible principal's
strategy has $\mathrm{Supp}(q_{\tau})\subset\bar{E}^{A}$. Moreover, any continuous path that starts
from $\bar{q}_{0}$ and ends outside of $[q^{1},q^{2}]$ leaves $E^{A}$;
hence, it is not admissible. Therefore, $\mathrm{Supp}(q_{\tau})\subset[q^{1},q^{2}]$.

Therefore, $J^{0}(\bar{q}^{0})$ is bounded above by the following
relaxed problem: 
\begin{align*}
\sup_{\pi\in\mathcal{P}(\mathcal{P}([q^{1},q^{2}]))} & \mathbb{E}^{\pi}[\rho(G(q)-G(\bar{q}_{0}))].
\end{align*}

Suppose $G$ is affine on $[q^1,q^2]$, then the principal is completely indifferent between any strategy; hence, the agent optimal strategy is optimal for the principal as well. If $G$ is not affine on $[q^1,q^2]$, the relaxed problem is solved by $\pi^{*}$
with support $\left\{ q^{1},q^{2}\right\} $ (such $\pi^{*}$ is unique).
By \citet{hebert2019rational}, there exists a Gaussian process that
implements $\pi^{*}$ and satisfies the information constraint. Note
that this Gaussian process also implements the agent maximal continuation
payoff (the upper concave hull of $\widehat{u}(q)-\frac{ \kappa }{\chi}(H(q)-H(q_{t}))$)
for every interim belief; hence, it is incentive compatible. 
\end{proof}

\subsection{\label{ssec:proof:limited:commit}Proof of \cref{thm:limited:commit}}
\begin{proof}
    \textbf{Sufficiency}: condition (ii) of Definition \ref{defn:seqn} is a direct implication of Propositions \ref{prop:existence} and \ref{prop:agent}. Next, we verify condition (i). Suppose for the purpose of contradiction that condition (i) is violated, i.e. there exist some $t$, a positive probability Borel set $B\in\mathcal{F}_t$ and $\epsilon>0$ s.t. $\forall B'\subset B$ and $B'\in \mathcal{F}_t$,
    \begin{align*}
        \mathbb{E}^P[G(q_{\tau})-G(q_t)|\omega\in B']\le \mathbb{E}^P[J(q_t)|\omega\in B']-\epsilon;
    \end{align*}

    Let $\pi'_q$ be the policy that implements $J(q)$ for each $q$. Let $\pi$ be the distribution of $q_{\tau}$. Then, the inequality above implies 
    \begin{align*}
        \mathbb{E}^{P}[\mathbb{E}^{\pi'_{q_t(\omega)}}[G(q)|\omega\in B]P(B)+\mathbb{E}^P[G(q_{\tau})|\omega\not\in B](1-P(B))> \mathbb{E}^{\pi}[G(q)].
    \end{align*}
    Meanwhile, \eqref{eq:no:surplus} implies that
    \begin{align*}
        \mathbb{E}^P[\widehat{u}(q_{\tau})-\widehat{u}(q_t)-\kappa(\tau-t)|\omega\in B']=0.
    \end{align*}    
    Since each $\pi_q$ is incentive compatible itself, 
    \begin{align*}
        &\mathbb{E}^{P}[\mathbb{E}^{\pi'_{q_t(\omega)}}[\widehat{u}(q)-\widehat{u}(q_t(\omega))-\frac{\kappa}{\chi}(H(q)-H(q_t(\omega))]|\omega\in B]\\
        \ge& 0=\mathbb{E}^P[\widehat{u}(q_{\tau})-\widehat{u}(q_t)-\frac{\kappa}{\chi}(H(q_{\tau})-H(q_t))|\omega\in B]\\
        \implies \ \ &\mathbb{E}^{P}[\mathbb{E}^{\pi'_{q_t(\omega)}}[\widehat{u}(q)-\frac{\kappa}{\chi}H(q)]|\omega\in B]\ge \mathbb{E}^P[\widehat{u}(q_{\tau})-\frac{\kappa}{\chi}H(q_{\tau})|\omega\in B]
    \end{align*}

The two inequalities above implies that the probability measure $\pi''=\mathbb{E}^P[\pi'_{q_t(\omega)}|\omega\in B]P(B)+\mathbb{E}^P[\delta_{q_{\tau}}|\omega\not\in B](1-P(B))$ is feasible in \eqref{eq:p:relaxed} and strictly improves upon $\pi$, leading to a contradiction.

Due to the sufficiency result, the dilution policy defined in Theorem \ref{thm:main-result} specifies a strong equilibrium.

\textbf{Necessity}: Suppose for the purpose of contradiction that the equality doesn't hold. Then, there exist some $t$, a positive probability Borel set $B\in\mathcal{F}_t$ and $\epsilon>0$ s.t. $\forall B'\subset B$ and $B'\in \mathcal{F}_t$,
\begin{align*}
    &\mathbb{E}^P[\widehat{u}(q_{\tau})-\widehat{u}(q_t)- \kappa (\tau-t)|\omega\in B']\ge \epsilon\\
    \implies&\mathbb{E}^P\left[\widehat{u}(q_{\tau})-\widehat{u}(q_t)-\frac{\kappa}{\chi}(H(q_{\tau})-H(q_t))|\omega\in B'\right]\ge \epsilon.
\end{align*}

Let $\bar{I}=\max_{q\in\mathcal{P}(X)}(\sum q_x H(e_x)-H(q))$. Let $\delta=\bar{I}/(\bar{I}+\epsilon)$. Let $\pi'_q$ be the distribution of $q_{\tau}|_{q_t=t}$ for $q\in\mathrm{supp}q_t$. Let $\pi''_q=\delta\pi'_q+(1-\delta) \sum q_x e_x$. Then, by construction, $\forall B'\subset B$,
\begin{align*}
\mathbb{E}^P\left[\mathbb{E}^{\pi''_{q_t(\omega)}}\left[\widehat{u}(q)-\widehat{u}(q_t)-\frac{\kappa}{\chi}(H(q)-H(q_t))\right]\Big|\omega\in B'\right]\ge 0.
\end{align*}
Since $\pi''_{q}$ leaves the agent with non-negative utility, it is feasible in \eqref{eq:p:relaxed}; hence, $\mathbb{E}^{\pi''_{q_t(\omega)}}[G(q)-G(q_t(\omega))]\le J(q_t(\omega))$.
By assumption, $\mathrm{supp}(q_{\tau})\subset \mathcal{P}(X)^+$; hence, $\mathrm{supp}(q_t)\subset \mathcal{P}(X)^+$ due to the martingale property. Therefore, $\forall B'\subset B$, $\mathbb{E}^P[G(q_{\tau})|\omega\in B']<\mathbb{E}^P[\sum q_x(\omega) G(e_x)|\omega\in B']$. This implies,
\begin{align*}
\mathbb{E}^P[G(q_{\tau})-G(q_t(\omega))|\omega\in B']<\mathbb{E}^P\left[\mathbb{E}^{\pi''_{q_t(\omega)}}\left[G(q)-G(q_t(\omega))\right]\big|\omega\in B'\right]\le \mathbb{E}^P[J(q_t)|\omega\in B'].
\end{align*}
This violates condition (i), which states 
\begin{align*}
    \mathrm{Prob}^P\Big(\mathbb{E}^P[G(q_{\tau})-G(q_t)-J(q_t)|\mathcal{F}_t]\ge 0\Big)=1.
\end{align*}
\end{proof}

\subsection{Proof of \cref{prop:spne}\label{ssec:proof:spne}}
\begin{proof}
By Theorem \ref{thm:main-result}, there is some $\pi\in\Pi^*$
that is the law of $q_{\tau}$ under the optimal policies. By the martingale property of the belief
process, $q_{t}=E^{P}[q_{\tau}|\mathcal{F}_{t},t<\tau]$ for
all $t\in[0,\tau)$, and thus $q_{t}$ must lie in the convex
hull of the support of $\pi$. 

For the sake of contradiction, suppose Proposition \ref{prop:spne} does not hold. Then, there exists positive probability event $B'\in\mathcal{F}_t|_{t<\tau}$ and $\epsilon>0$ such that $\forall \omega\in B'$,
\begin{align*}
    \inf_{\pi\in \Pi^*}\underline{V}_R(q_t(\omega),\pi)\ge\epsilon.
\end{align*}
Then, $\mathbb{E}^P\left[\widehat{u}(q_{\tau})-\widehat{u}(q_t)-\frac{ \kappa }{\chi}(H(q_{\tau})-H(q_t))\Big|\omega\in B'\right]\ge \epsilon,$
i.e., the agent obtains positive interim surplus on positive probability event $B'$, which violates \cref{thm:limited:commit}.
\end{proof}

\end{document}